\documentclass[11pt]{article}

\usepackage[margin=1in]{geometry}
\usepackage{color}
\usepackage{amssymb}
\usepackage{amsmath}
\usepackage{amsthm}
\usepackage{url}
\usepackage{graphicx}

\newtheorem{theorem}{Theorem}
\newtheorem{definition}[theorem]{Definition}

\newtheorem{lemma}[theorem]{Lemma}

\newtheorem{problem}[theorem]{Problem}

\begin{document}

\title{Discrete uncertainty principles and sparse signal processing}

\author{Afonso S.\ Bandeira\footnote{Department of Mathematics, Courant Institute of Mathematical Sciences, New York University, New York, NY} \qquad Megan E.\ Lewis\footnote{Detachment~5, Air Force Operational Test and Evaluation Center, Edwards AFB, CA} \qquad Dustin G.\ Mixon\footnote{Department of Mathematics and Statistics, Air Force Institute of Technology, Wright-Patterson AFB, OH}}

\maketitle

\begin{abstract}
We develop new discrete uncertainty principles in terms of numerical sparsity, which is a continuous proxy for the $0$-norm.
Unlike traditional sparsity, the continuity of numerical sparsity naturally accommodates functions which are nearly sparse.
After studying these principles and the functions that achieve exact or near equality in them, we identify certain consequences in a number of sparse signal processing applications.
\end{abstract}

\section{Introduction}

Uncertainty principles have maintained a significant role in both science and engineering for most of the past century.
In 1927, the concept was introduced by Werner Heisenberg in the context of quantum mechanics~\cite{Heisenberg:27}, in which a particle's position and momentum are represented by wavefunctions $f,g\in L^2(\mathbb{R})$, and $g$ happens to be the Fourier transform of $f$.
Measuring the position or momentum of a particle amounts to drawing a random variable whose probability density function is a normalized version of $|f|^2$ or $|g|^2$, respectively.
Heisenberg's uncertainty principle postulates a fundamental limit on the precision with which one can measure both position and momentum; in particular, the variance of the position measurement is small only if the momentum measurement exhibits large variance.
From a mathematical perspective, this physical principle can be viewed as an instance of a much broader meta-theorem in harmonic analysis:

\begin{quote}
A nonzero function and its Fourier transform cannot be simultaneously localized. 
\end{quote}

Heisenberg's uncertainty principle provides a lower bound on the product of the variances of the probability density functions corresponding to $f$ and $\hat{f}$.
In the time since, various methods have emerged for quantifying localization.
For example, instead of variance, one might consider entropy~\cite{Beckner:75}, the size of the density's support~\cite{AmreinB:77}, or how rapidly it decays~\cite{Hardy:33}.
Furthermore, the tradeoff in localization need not be represented by a product---as we will see, it is sometimes more telling to consider a sum.

Beyond physics, the impossibility of simultaneous localization has had significant consequences in signal processing.
For example, when working with the short-time Fourier transform, one is forced to choose between temporal and frequency resolution.
More recently, the emergence of digital signal processing has prompted the investigation of uncertainty principles underlying the discrete Fourier transform, notably by Donoho and Stark~\cite{DonohoS:89}, Tao~\cite{Tao:05}, and Tropp~\cite{Tropp:08}.
Associated with this line of work is the uniform uncertainty principle of Cand\`{e}s and Tao~\cite{CandesT:06}, which played a key role in the development of compressed sensing.
The present paper continues this investigation of discrete uncertainty principles with an eye on applications in sparse signal processing.

\subsection{Background and overview}

For any finite abelian group $G$, let $\ell(G)$ denote the set of functions $x\colon G\rightarrow\mathbb{C}$, and $\widehat{G}\subseteq\ell(G)$ the group of characters over $G$.
Then taking inner products with these characters and normalizing leads to the (unitary) Fourier transform $F\colon\ell(G)\rightarrow\ell(\widehat{G})$, namely
\[
(Fx)[\chi]
:=\frac{1}{\sqrt{|G|}}\sum_{g\in G}x[g]\overline{\chi[g]}
\qquad
\forall \chi\in \widehat{G}.
\]
The reader who is unfamiliar with Fourier analysis over finite abelian groups is invited to learn more in \cite{Terras:99}.
In the case where $G=\mathbb{Z}/n\mathbb{Z}$ (which we denote by $\mathbb{Z}_n$ in the sequel), the above definition coincides with the familiar discrete Fourier transform after one identifies characters with their frequencies.
The following theorem provides two uncertainty principles in terms of the so-called $0$-norm $\|\cdot\|_0$, defined to be number of nonzero entries in the argument.

\begin{theorem}[Theorem~1 in~\cite{DonohoS:89}, Theorem~1.1 in~\cite{Tao:05}]
\label{thm.tao up}
Let $G$ be a finite abelian group, and let $F\colon\ell(G)\rightarrow\ell(\widehat{G})$ denote the corresponding Fourier transform.
Then
\begin{equation}
\label{eq.multiplicative tao up}
\|x\|_0\|Fx\|_0\geq|G|\qquad\forall x\in\ell(G)\setminus\{0\}.
\end{equation}
Furthermore, if $|G|$ is prime, then
\begin{equation}
\label{eq.additive tao up}
\|x\|_0+\|Fx\|_0\geq|G|+1\qquad\forall x\in\ell(G)\setminus\{0\}.
\end{equation}
\end{theorem}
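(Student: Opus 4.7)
For the multiplicative inequality \eqref{eq.multiplicative tao up}, I would follow the Donoho--Stark strategy of chaining standard norm inequalities. Fix a nonzero $x\in\ell(G)$ and set $s=\|x\|_0$, $t=\|Fx\|_0$. The inversion formula gives $x[g]=\frac{1}{\sqrt{|G|}}\sum_{\chi\in\widehat{G}}(Fx)[\chi]\,\chi[g]$, and since $|\chi[g]|=1$, the triangle inequality yields $\|x\|_\infty\leq\frac{1}{\sqrt{|G|}}\|Fx\|_1$. Combining this with the Cauchy--Schwarz bound $\|Fx\|_1\leq\sqrt{t}\,\|Fx\|_2$, Plancherel's identity $\|Fx\|_2=\|x\|_2$, and the elementary estimate $\|x\|_2\leq\sqrt{s}\,\|x\|_\infty$ produces $\|x\|_\infty\leq\frac{\sqrt{st}}{\sqrt{|G|}}\|x\|_\infty$. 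Since $\|x\|_\infty>0$, division gives $st\geq|G|$.

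For the additive inequality \eqref{eq.additive tao up}, I would invoke Chebotarev's theorem, which asserts that every square submatrix of the $p\times p$ discrete Fourier matrix is nonsingular whenever $p$ is prime. Suppose for contradiction that some nonzero $x$ satisfies $\|x\|_0+\|Fx\|_0\leq p$, and set $T=\operatorname{supp}(x)$, $W=\operatorname{supp}(Fx)$, so that $|T|+|W|\leq p$. Let $M$ be the submatrix of $F$ obtained by restricting rows to $\widehat{G}\setminus W$ and columns to $T$; then $M$ has $p-|W|\geq|T|$ rows and exactly $|T|$ columns, and $x|_T$ is a nonzero vector in $\ker M$ by construction. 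However, Chebotarev forces every $|T|\times|T|$ submatrix of $M$ to be invertible, so $M$ has full column rank and hence trivial kernel, contradicting $x|_T\in\ker M\setminus\{0\}$.

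The only substantive obstacle is Chebotarev's theorem itself, a nontrivial fact whose standard proof uses Galois theory and arithmetic in cyclotomic fields; I would cite it as a black box rather than reprove it. Everything else reduces to elementary norm inequalities for the product bound and a rank count for the sum bound.
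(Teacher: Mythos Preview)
Your proposal is correct and matches the paper's proof essentially line for line: the multiplicative bound is obtained by the same chain of $\|\cdot\|_\infty$--$\|\cdot\|_1$--$\|\cdot\|_2$ inequalities plus Parseval (you start from $\|x\|_\infty\le|G|^{-1/2}\|Fx\|_1$ whereas the paper starts from the symmetric inequality $\|Fx\|_\infty\le|G|^{-1/2}\|x\|_1$, but the argument is identical), and the additive bound is the same contradiction via Chebotar\"{e}v's theorem applied to a square minor indexed by $\operatorname{supp}(x)$ and a subset of the complement of $\operatorname{supp}(Fx)$.
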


\begin{proof}[Proof sketch]
For \eqref{eq.multiplicative tao up}, apply the fact that the $\ell_1/\ell_\infty$-induced norm of $F$ is given by $\|F\|_{1\rightarrow\infty}=1/\sqrt{|G|}$, along with Cauchy--Schwarz and Parseval's identity: 
\[
\|Fx\|_{\infty}
\leq\frac{1}{\sqrt{|G|}}\|x\|_1\\
\leq\sqrt{\frac{\|x\|_0}{|G|}}\|x\|_2
=\sqrt{\frac{\|x\|_0}{|G|}}\|Fx\|_2
\leq\sqrt{\frac{\|x\|_0\|Fx\|_0}{|G|}}\|Fx\|_{\infty},
\]
where the last step bounds a sum in terms of its largest summand.
Rearranging gives the result.

For \eqref{eq.additive tao up}, suppose otherwise that there exists $x\neq0$ which violates the claimed inequality.
Denote $\mathcal{J}=\operatorname{supp}(x)$ and pick some $\mathcal{I}\subseteq\widehat{G}\setminus\operatorname{supp}(Fx)$ with $|\mathcal{I}|=|\mathcal{J}|$.
Then $0=(Fx)_\mathcal{I}=F_{\mathcal{I}\mathcal{J}}x_\mathcal{J}$.
Since the submatrix $F_{\mathcal{I}\mathcal{J}}$ is necessarily invertible by a theorem of Chebotar\"{e}v~\cite{StevenhagenL:96}, we conclude that $x_\mathcal{J}=0$, a contradiction.
\end{proof}

We note that the additive uncertainty principle above is much stronger than its multiplicative counterpart.
Indeed, with the help of the arithmetic mean--geometric mean inequality, \eqref{eq.multiplicative tao up} immediately implies
\begin{equation}
\label{eq.amgm}
\|x\|_0+\|Fx\|_0
\geq2\sqrt{\|x\|_0\|Fx\|_0}
\geq2\sqrt{|G|}
\qquad
\forall x\in\ell(G),
\end{equation}
which is sharp when $G=\mathbb{Z}_n$ and $n$ is a perfect square (simply take $x$ to be a Dirac comb, specifically, the indicator function $1_K$ of the subgroup $K$ of size $\sqrt{n}$).
More generally, if $n$ is not prime, then $n=ab$ with integers $a,b\in[2,n/2]$, and so $a+b\leq n/2+2<n+1$; as such, taking $x$ to be an indicator function of the subgroup of size $a$ (whose Fourier transform necessarily has $0$-norm $b$) will violate \eqref{eq.additive tao up}.
Overall, the hypothesis that $|G|$ is prime cannot be weakened.
Still, something can be said if one slightly strengthens the hypothesis on $x$.
For example, Theorem~A in~\cite{Tropp:10} gives that for every $S\subseteq G$,
\[
\|x\|_0+\|Fx\|_0
>\sqrt{|G|\|x\|_0}
\]
for almost every $x\in\ell(G)$ supported on $S$.
This suggests that extreme functions like the Dirac comb are atypical, i.e., \eqref{eq.amgm} is ``barely sharp.''

One could analogously argue that, in some sense, \eqref{eq.additive tao up} is ``barely true'' when $|G|$ is prime.
For an illustration, Figure~\ref{figure.discrete gaussian} depicts a discrete version of the Gaussian function, which is constructed by first periodizing the function $f(t)=e^{-n\pi t^2}$ over the real line in order to have unit period, and then sampling this periodized function at multiples of $1/n$.
As we verify in subsection~\ref{subsection.near equality}, the resulting function $x\in\ell(\mathbb{Z}_n)$ satisfies $Fx=x$, analogous to the fact that a Gaussian function in $L^2(\mathbb{R})$ with the proper width is fixed by the Fourier transform.
Given its resemblance to the fast-decaying Gaussian function over $\mathbb{R}$, it comes as no surprise that many entries of this function are nearly zero.
In the depicted case where $n=211$ (which is prime), only $99$ entries of this function manage to be larger than machine precision, and so from a numerical perspective, this function appears to contradict Theorem~\ref{thm.tao up}: $99+99=198<212=211+1$.

\begin{figure}[t]
\begin{center}
\includegraphics[width=0.95\textwidth]{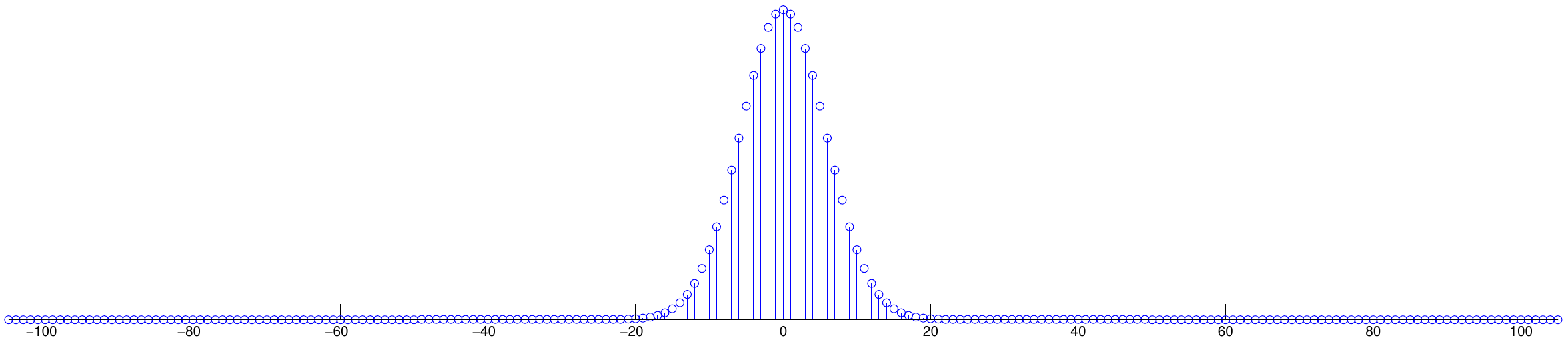}
\end{center}
\caption{
\label{figure.discrete gaussian}
{\footnotesize 
Discrete Gaussian function, obtained by periodizing the function $f(t)=e^{-n\pi t^2}$ with period $1$ before sampling at multiples of $1/n$.
The resulting function in $\mathbb{Z}_n$ is fixed by the $n\times n$ discrete Fourier transform.
In this figure, we take $n=211$, and only $99$ entries are larger than machine precision (i.e., $2.22\times10^{-16}$).
As such, an unsuspecting signal processor might think $\|x\|_0$ and $\|Fx\|_0$ are both $99$ instead of $211$.
Since $211$ is prime and $99+99=198<212=211+1$, this illustrates a lack of numerical robustness in the additive uncertainty principle of Theorem~\ref{thm.tao up}.
By contrast, our main result (Theorem~\ref{thm:main}) provides a robust alternative in terms of numerical sparsity, though the result is not valid for the discrete Fourier transform, but rather a random unitary matrix.
}
\normalsize}
\end{figure}

To help resolve this discrepancy, we consider a numerical version of traditional sparsity which is aptly named \textbf{numerical sparsity}:
\[
\operatorname{ns}(x)
:=\frac{\|x\|_1^2}{\|x\|_2^2}
\qquad
\forall x\in\mathbb{C}^n\setminus\{0\}.
\]
See Figure~\ref{figure.numerical sparsity} for an illustration.
This ratio appeared as early as 1978 in the context of geophysics~\cite{Gray:78}.
More recently, it has been used as a proxy for sparsity in various signal processing applications~\cite{HurleyR:09,Lopez:arxiv12,DemanetH:14,RepettiPDCP:14,Studer:15}.
The numerical rank of a matrix is analogously defined as the square ratio of the nuclear and Frobenius norms, and has been used, for example, in Alon's work on extremal combinatorics~\cite{Alon:03}.
We note that numerical sparsity is invariant under nonzero scaling, much like traditional sparsity.
In addition, one bounds the other:
\begin{equation}
\label{eq.ns vs 0norm}
\operatorname{ns}(x)
\leq\|x\|_0.
\end{equation}
To see this, apply Cauchy--Schwarz to get
\[
\|x\|_1
=\langle |x|,1_{\operatorname{supp}(x)}\rangle
\leq \|x\|_2\|1_{\operatorname{supp}(x)}\|_2
=\|x\|_2\sqrt{\|x\|_0},
\]
where $|x|$ denotes the entrywise absolute value of $x$.
Rearranging then gives \eqref{eq.ns vs 0norm}.
For this paper, the most useful feature of numerical sparsity is its continuity, as this will prevent near-counterexamples like the one depicted in Figure~\ref{figure.discrete gaussian}.
What follows is our main result, which leverages numerical sparsity to provide uncertainty principles that are analogous to those in Theorem~\ref{thm.tao up}:

\begin{figure}[t]
\begin{center}
\includegraphics[width=0.45\textwidth]{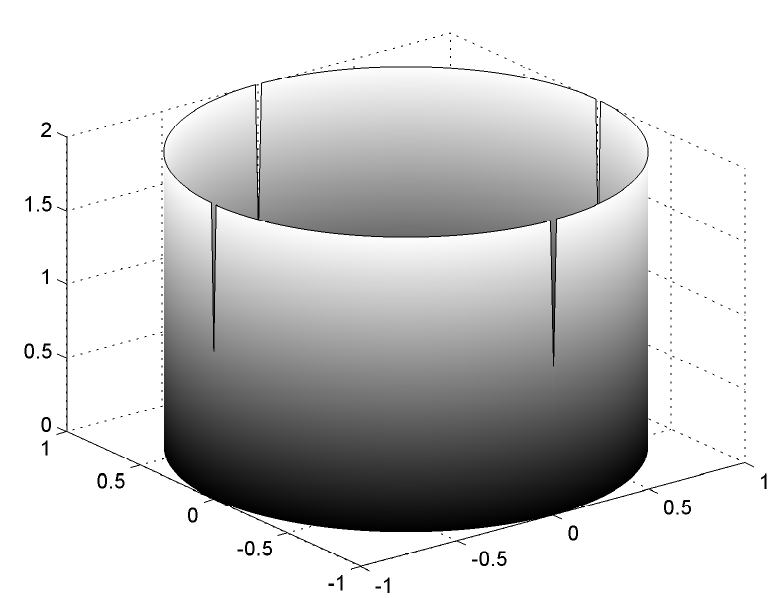}
\includegraphics[width=0.45\textwidth]{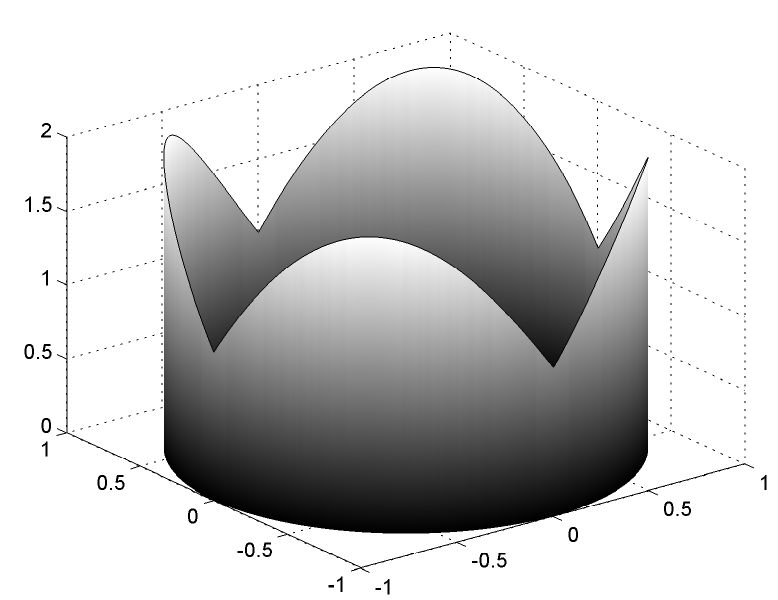}
\end{center}
\caption{
\label{figure.numerical sparsity}
{\footnotesize 
Traditional sparsity $\|x\|_0$ (left) and numerical sparsity $\operatorname{ns}(x)$ (right) for all $x$ in the unit circle in $\mathbb{R}^2$.
This illustrates how numerical sparsity is a continuous analog of traditional sparsity; we leverage this feature to provide robust alternatives to the uncertainty principles of Theorem~\ref{thm.tao up}.
In this case, one may verify that $\operatorname{ns}(x)\leq\|x\|_0$ by visual inspection.
}
\normalsize}
\end{figure}

\begin{theorem}[Main result\footnote{Recall that $f(n)=O(g(n))$ if there exists $C,n_0>0$ such that $f(n)\leq Cg(n)$ for all $n>n_0$. We write $f(n)=O_\delta(g(n))$ if the constant $C$ is a function of $\delta$. Also, $f(n)=\Omega(g(n))$ if $g(n)=O(f(n))$, and $f(n)=o(g(n))$ if $f(n)/g(n)\rightarrow 0$ as $n\rightarrow\infty$.}]
\label{thm:main}
Let $U$ be an $n\times n$ unitary matrix. Then
\begin{equation}
\label{eq.main 1}
\operatorname{ns}(x)\operatorname{ns}(Ux)\geq\frac{1}{\|U\|_{1\rightarrow\infty}^2}\qquad\forall x\in\mathbb{C}^n\setminus\{0\},
\end{equation}
where $\|\cdot\|_{1\rightarrow\infty}$ denotes the induced matrix norm.
Furthermore, there exists a universal constant $c>0$ such that if $U$ is drawn uniformly from the unitary group $\operatorname{U}(n)$, then with probability $1-e^{-\Omega(n)}$,
\begin{equation}
\label{eq.main 2}
\operatorname{ns}(x)+\operatorname{ns}(Ux)\geq (c-o(1))n\qquad\forall x\in\mathbb{C}^n\setminus\{0\}.
\end{equation}
\end{theorem}

Perhaps the most glaring difference between Theorems~\ref{thm.tao up} and~\ref{thm:main} is our replacement of the Fourier transform with an arbitrary unitary matrix.
Such generalizations have appeared in the quantum physics literature (for example, see~\cite{Kraus:87}), as well as in the sparse signal processing literature~\cite{DonohoH:01,DonohoE:03,GribonvalN:07,Tropp:08,Tropp:10,Studer:15}.
Our multiplicative uncertainty principle still applies when $U=F$, in which case $\|U\|_{1\rightarrow\infty}=1/\sqrt{n}$.
Considering \eqref{eq.ns vs 0norm}, the uncertainty principle in this case immediately implies the analogous principle in Theorem~\ref{thm.tao up}.
Furthermore, the proof is rather straightforward:
Apply H\"{o}lder's inequality to get
\begin{equation}
\label{eq.proof of main result}
\operatorname{ns}(x)\operatorname{ns}(Ux)=\frac{\|x\|_1^2}{\|x\|_2^2}\cdot\frac{\|Ux\|_1^2}{\|Ux\|_2^2}
\geq\frac{\|x\|_1^2}{\|x\|_2^2}\cdot\frac{\|Ux\|_2^2}{\|Ux\|_{\infty}^2}
=\frac{\|x\|_1^2}{\|Ux\|_{\infty}^2}
\geq\frac{1}{\|U\|_{1\rightarrow\infty}^2}.
\end{equation}

By contrast, the proof of our additive uncertainty principle is not straightforward, and it does not hold if we replace $U$ with $F$.
Indeed, as we show in subsection~\ref{subsection.near equality}, the discrete Gaussian function depicted in Figure~\ref{figure.discrete gaussian} has numerical sparsity $O(\sqrt{n})$, thereby violating \eqref{eq.main 2}; recall that the same function is a near-counterexample of the analogous principle in Theorem~\ref{thm.tao up}.
Interestingly, our uncertainty principle establishes that the Fourier transform is rare in that the vast majority of unitary matrices offer much more uncertainty in the worst case.
This naturally leads to the following question:

\begin{problem}
For each $n$, what is the largest $c=c(n)$ for which there exists a unitary matrix $U$ that satisfies $\operatorname{ns}(x)+\operatorname{ns}(Ux)\geq cn$ for every $x\in\mathbb{C}^n\setminus\{0\}$?
\end{problem}

Letting $x=e_1$ gives $\operatorname{ns}(x)+\operatorname{ns}(Ux)\leq1+\|Ux\|_0\leq n+1$, and so $c(n)\leq 1+o(1)$; a bit more work produces a strict inequality $c(n)<1+1/n$ for $n\geq4$.
Also, our proof of the uncertainty principle implies $\liminf_{n\rightarrow\infty}c(n)\geq1/540000$.

\subsection{Outline}

The primary focus of this paper is Theorem~\ref{thm:main}.
Having already proved the multiplicative uncertainty principle in \eqref{eq.proof of main result}, it remains to prove the additive counterpart, which we do in the following section.
Next, Section~3 considers functions which achieve either exact or near equality in \eqref{eq.main 1} when $U$ is the discrete Fourier transform.
Surprisingly, exact equality occurs in \eqref{eq.main 1} precisely when it occurs in \eqref{eq.multiplicative tao up}.
We also show that the discrete Gaussian depicted in Figure~\ref{figure.discrete gaussian} achieves near equality in \eqref{eq.main 1}.
We conclude in Section~4 by studying a few applications, specifically, sparse signal demixing, compressed sensing with partial Fourier operators, and the fast detection of sparse signals.

\section{Proof of additive uncertainty principle}

In this section, we prove the additive uncertainty principle in Theorem~\ref{thm:main}.
The following provides a more explicit statement of the principle we prove:

\begin{theorem}
\label{thm.additive up}
Draw $U$ uniformly from the unitary group $\operatorname{U}(n)$.
Then with probability $\geq1-8e^{-n/4096}$,
\[
\operatorname{ns}(x)+\operatorname{ns}(Ux)\geq\frac{1}{9}\left\lfloor\frac{n}{60000}\right\rfloor\qquad\forall x\in\mathbb{C}^n\setminus\{0\}.
\]
\end{theorem}

For the record, we did not attempt to optimize the constants.
Our proof of this theorem makes use of several ideas from the compressed sensing literature:

\begin{definition}
Take any $m\times n$ matrix $\Phi=[\varphi_1\cdots\varphi_n]$.
\begin{itemize}
\item[(a)]
We say $\Phi$ exhibits $(k,\theta)$-\textbf{restricted orthogonality} if
\[
|\langle \Phi x,\Phi y\rangle|\leq\theta\|x\|_2\|y\|_2
\]
for every $x,y\in\mathbb{C}^n$ with $\|x\|_0,\|y\|_0\leq k$ and disjoint support.
\item[(b)]
We say $\Phi$ satisfies the $(k,\delta)$-\textbf{restricted isometry property} if
\[
(1-\delta)\|x\|_2^2\leq\|\Phi x\|_2^2\leq(1+\delta)\|x\|_2^2
\]
for every $x\in\mathbb{C}^n$ with $\|x\|_0\leq k$.
\item[(c)]
We say $\Phi$ satisfies the $(k,c)$-\textbf{width property} if
\[
\|x\|_2\leq\frac{c}{\sqrt{k}}\|x\|_1
\]
for every $x$ in the nullspace of $\Phi$.
\end{itemize}
\end{definition}

The restricted isometry property is a now-standard sufficient condition for uniformly stable and robust reconstruction from compressed sensing measurements (for example, see~\cite{Candes:08}).
As the following statement reveals, restricted orthogonality implies the restricted isometry property:

\begin{lemma}[Lemma~11 in~\cite{BandeiraFMW:13}]
\label{lemma.ro to rip}
If a matrix satisfies $(k,\theta)$-restricted orthogonality and its columns have unit norm, then it also satisfies the $(k,\delta)$-restricted isometry property with $\delta=2\theta$.
\end{lemma}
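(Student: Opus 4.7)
The plan is to open up $\|\Phi x\|_2^2$ via the Gram matrix, use the unit-norm hypothesis to peel off the diagonal contribution as exactly $\|x\|_2^2$, and then control the off-diagonal quadratic form by restricted orthogonality after a randomized decoupling. Concretely, let $S=\operatorname{supp}(x)$ with $|S|\le k$, and write $\Phi=[\varphi_1\;\cdots\;\varphi_n]$. Expanding gives
\[
\|\Phi x\|_2^2 \;=\; \sum_{i,j\in S} x_i\overline{x_j}\,\langle\varphi_i,\varphi_j\rangle \;=\; \|x\|_2^2 + E(x),\qquad E(x):=\sum_{\substack{i,j\in S\\ i\ne j}} x_i\overline{x_j}\,\langle\varphi_i,\varphi_j\rangle,
\]
so the claimed $(k,2\theta)$-RIP is equivalent to the single inequality $|E(x)|\le 2\theta\,\|x\|_2^2$.

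The main step is a random-partition decoupling of $E(x)$. I would place each $i\in S$ independently into one of two blocks $S_1,S_2$ with probability $\tfrac12$ each. Then $\Pr(i\in S_1,\,j\in S_2)=\tfrac14$ for every pair $i\ne j$, so taking the expectation of the bilinear form $\langle\Phi x_{S_1},\Phi x_{S_2}\rangle$ recovers the off-diagonal sum up to a factor of $4$:
\[
\mathbb{E}\bigl[\langle\Phi x_{S_1},\Phi x_{S_2}\rangle\bigr] \;=\; \tfrac14\, E(x).
\]
By construction $x_{S_1}$ and $x_{S_2}$ have disjoint supports, each of cardinality at most $|S|\le k$, so the $(k,\theta)$-restricted orthogonality hypothesis applies to every realization and yields $|\langle\Phi x_{S_1},\Phi x_{S_2}\rangle|\le \theta\,\|x_{S_1}\|_2\,\|x_{S_2}\|_2$.

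To finish, I would bound the product of norms pointwise by AM--GM, $\|x_{S_1}\|_2\,\|x_{S_2}\|_2\le \tfrac12(\|x_{S_1}\|_2^2+\|x_{S_2}\|_2^2)=\tfrac12\|x\|_2^2$ (the last equality uses $S_1\sqcup S_2=S$), and then take expectations to arrive at $|E(x)|\le 2\theta\|x\|_2^2$. There is no serious obstacle here; the only delicate point worth noting is that $E(x)$ is automatically real (by Hermiticity of the Gram matrix), which matches the fact that $\mathbb{E}\bigl[\langle\Phi x_{S_1},\Phi x_{S_2}\rangle\bigr]$ is real by the $S_1\leftrightarrow S_2$ symmetry of the random partition, so passing from the complex-valued bound to the real-valued identity costs no additional factor.
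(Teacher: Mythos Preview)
Your argument is correct: the Gram expansion isolates $E(x)$, the random bipartition gives $\mathbb{E}\langle\Phi x_{S_1},\Phi x_{S_2}\rangle=\tfrac14 E(x)$, and combining the pointwise bound $|\langle\Phi x_{S_1},\Phi x_{S_2}\rangle|\le\theta\|x_{S_1}\|_2\|x_{S_2}\|_2\le\tfrac{\theta}{2}\|x\|_2^2$ with $|\mathbb{E}\,Z|\le\mathbb{E}|Z|$ yields $|E(x)|\le 2\theta\|x\|_2^2$ as claimed. The present paper does not supply its own proof of this lemma---it simply quotes the result from \cite{BandeiraFMW:13}---and the random-decoupling argument you wrote is essentially the one given there.
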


To prove Theorem~\ref{thm.additive up}, we will actually make use of the width property, which was introduced by Kashin and Temlyakov~\cite{KashinT:07} to characterize uniformly stable $\ell_1$ reconstruction for compressed sensing.
Luckily, the restricted isometry property implies the width property:

\begin{lemma}[Theorem~11 in~\cite{CahillM:arxiv14}, cf.~\cite{KashinT:07}]
\label{lemma.rip to wp}
If a matrix satisfies the $(k,\delta)$-restricted isometry property for some positive integer $k$ and $\delta<1/3$, then it also satisfies the $(k,3)$-width property.
\end{lemma}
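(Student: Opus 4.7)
The plan is to deploy the standard ``shelling'' argument familiar from compressed sensing. Fix a nonzero vector $x$ in the nullspace of $\Phi$ and, after permuting coordinates so that $|x_1|\ge|x_2|\ge\cdots$, partition the index set into consecutive blocks $T_0,T_1,T_2,\ldots$ each of size at most $k$. The crux of the argument is the elementary tail bound
\[
\|x_{T_j}\|_2\le\sqrt{k}\cdot\max_{i\in T_j}|x_i|\le\sqrt{k}\cdot\frac{\|x_{T_{j-1}}\|_1}{k}=\frac{\|x_{T_{j-1}}\|_1}{\sqrt{k}}\qquad(j\ge 1),
\]
since each entry in $T_j$ is no larger in magnitude than the average magnitude on $T_{j-1}$. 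Summing over $j\ge 1$ telescopes to $\sum_{j\ge 1}\|x_{T_j}\|_2\le\|x\|_1/\sqrt{k}$.

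Next, I would invoke $\Phi x=0$ to obtain $\Phi x_{T_0}=-\sum_{j\ge 1}\Phi x_{T_j}$, apply the lower RIP bound to $x_{T_0}$ on the left, and apply the triangle inequality together with the upper RIP bound termwise on the right. This yields
\[
\sqrt{1-\delta}\,\|x_{T_0}\|_2\le\sqrt{1+\delta}\sum_{j\ge 1}\|x_{T_j}\|_2\le\sqrt{1+\delta}\cdot\frac{\|x\|_1}{\sqrt{k}}.
\]
Combining with the triangle inequality $\|x\|_2\le\|x_{T_0}\|_2+\sum_{j\ge 1}\|x_{T_j}\|_2$ gives the width inequality with constant $1+\sqrt{(1+\delta)/(1-\delta)}$, which for $\delta<1/3$ is strictly less than $1+\sqrt{2}<3$, as required.

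There is no serious obstacle to this plan; the argument uses only the RIP directly (not restricted orthogonality, so Lemma~\ref{lemma.ro to rip} is not needed) and is essentially a careful bookkeeping of constants chosen so that the pivot value $\delta=1/3$ yields the width constant $3$. The only mild subtlety is confirming that the shelling tail bound is valid for the final (possibly short) block; this is automatic since the estimate $\|x_{T_j}\|_2\le\sqrt{|T_j|}\,\max_{i\in T_j}|x_i|$ only improves when $|T_j|<k$, and RIP still applies to any vector with support of size at most $k$.
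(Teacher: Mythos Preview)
Your argument is correct and is precisely the standard shelling proof (essentially Cand\`{e}s' argument, and in spirit Kashin--Temlyakov). Note, however, that the paper does not actually supply a proof of this lemma; it simply quotes it as Theorem~11 of \cite{CahillM:arxiv14} (cf.\ \cite{KashinT:07}) and moves on. Your write-up recovers exactly the statement with the intended constants---$\delta<1/3$ yielding $1+\sqrt{(1+\delta)/(1-\delta)}<1+\sqrt{2}<3$---and is the argument one finds in those references, so there is nothing substantive to contrast.
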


What follows is a stepping-stone result that we will use to prove Theorem~\ref{thm.additive up}, but it is also of independent interest:

\begin{theorem}
\label{thm.iu rip}
Draw $U$ uniformly from the unitary group $\operatorname{U}(n)$.
Then $[I~U]$ satisfies the $(k,\delta)$-restricted isometry property with probability $\geq1-8e^{-\delta^2n/256}$ provided $\delta<1$ and
\begin{equation}
\label{eq.choice of k for IU rip}
n\geq\frac{256}{\delta^2}k\log\bigg(\frac{en}{k}\bigg).
\end{equation}
\end{theorem}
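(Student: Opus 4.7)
The plan is to combine Lemma~\ref{lemma.ro to rip} with a standard $\epsilon$-net and union-bound argument, using Haar invariance for pointwise concentration. Since $[I~U]$ has unit-norm columns, Lemma~\ref{lemma.ro to rip} reduces the task to proving $(k,\delta/2)$-restricted orthogonality. Write any pair of disjointly supported $k$-sparse $x,y\in\mathbb{C}^{2n}$ in block form $x=(x_1,x_2)^{\top}$, $y=(y_1,y_2)^{\top}$ matching the two $n\times n$ blocks of $[I~U]$. Disjointness of supports forces $\langle x_1,y_1\rangle=0$, and unitarity turns $\langle Ux_2,Uy_2\rangle$ into $\langle x_2,y_2\rangle=0$, leaving
\[
\langle [I~U]x,[I~U]y\rangle=\langle x_1,Uy_2\rangle+\langle Ux_2,y_1\rangle.
\]
Bounding each surviving term by the operator norm of the relevant $k\times k$ submatrix of $U$ and applying Cauchy--Schwarz to $(\|x_1\|_2,\|x_2\|_2)$ and $(\|y_2\|_2,\|y_1\|_2)$ reduces the theorem to showing that, with the stated probability,
\[
\max_{\substack{T,S\subseteq[n]\\|T|=|S|=k}}\|P_TUP_S\|_{\operatorname{op}}\leq\delta/2,
\]
where $P_T,P_S$ denote coordinate projections.

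For each fixed pair $(T,S)$, fix $\epsilon$-nets $\mathcal{N}_T,\mathcal{N}_S$ (with, say, $\epsilon=1/4$) of the unit spheres of $\mathbb{C}^T,\mathbb{C}^S$, each of size at most $9^{2k}$, so that $\|P_TUP_S\|_{\operatorname{op}}\leq(1-\epsilon)^{-2}\max_{a\in\mathcal{N}_T,b\in\mathcal{N}_S}|\langle a,Ub\rangle|$. For any fixed unit vectors $a,b\in\mathbb{C}^n$, Haar invariance makes $U^{*}a$ uniform on the complex sphere $S^{2n-1}$, so $\langle a,Ub\rangle=\langle U^{*}a,b\rangle$ is distributed like a single coordinate of such a uniform vector, yielding the tail bound $\Pr(|\langle a,Ub\rangle|\geq t)=(1-t^2)^{n-1}\leq e^{-(n-1)t^2}$. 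A union bound over the $\binom{n}{k}^2\leq(en/k)^{2k}$ pairs of subsets and the at most $9^{4k}$ net points then shows
\[
\Pr\!\Big(\max_{T,S}\|P_TUP_S\|_{\operatorname{op}}\geq(1-\epsilon)^{-2}t\Big)\leq\exp\!\Big(2k\log(en/k)+4k\log 9-(n-1)t^2\Big).
\]

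Choosing $t$ of order $\delta$---concretely, $t=(1-\epsilon)^{2}\delta/2=9\delta/32$---and invoking the hypothesis $n\geq(256/\delta^2)k\log(en/k)$ makes the entropy $2k\log(en/k)+4k\log 9$ a strict fraction of $(n-1)t^2$, leaving $-\delta^2n/256$ in the exponent after bookkeeping. The main obstacle is this final numerical tuning: every conceptual ingredient---reduction to submatrices via unitarity, the $\epsilon$-net argument, and the spherical tail bound---is standard, but producing the sharp constants $256$ in the exponent and $8$ in the prefactor requires a careful simultaneous choice of $\epsilon$ and $t$ so that the entropy of the net plus the $\binom{n}{k}^2$ union bound is cleanly absorbed into the tail exponent.
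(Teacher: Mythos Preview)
Your approach is correct and follows the same overall route as the paper: reduce to $(k,\delta/2)$-restricted orthogonality via Lemma~\ref{lemma.ro to rip}, exploit the block structure of $[I~U]$ to reduce to bounding $\sup_{\|x\|_0,\|y\|_0\le k}|\langle x,Uy\rangle|$, then combine an $\epsilon$-net with a union bound and Haar invariance for the pointwise tail.

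Where you differ from the paper is in two implementation details, both of which are in fact cleaner than what the paper does. First, the paper uses an \emph{additive} net approximation $|\langle x,Uy\rangle|\le|\langle x',Uy'\rangle|+3\epsilon$ and is then forced to pick a shrinking $\epsilon=\sqrt{(k/n)\log(en/k)}$; you instead use the \emph{multiplicative} bound $\|P_TUP_S\|_{\mathrm{op}}\le(1-\epsilon)^{-2}\max_{a,b\in\text{net}}|\langle a,Ub\rangle|$ with a fixed $\epsilon=1/4$, which is the more standard maneuver for operator norms. Second, for the pointwise tail the paper writes $\operatorname{Re}(U_{11})$ as $g/\sqrt{h}$ with $g$ Gaussian and $h$ chi-squared, then splits into two events; you instead invoke the exact distribution $\Pr(|U_{11}|\ge t)=(1-t^2)^{n-1}$, which is both shorter and sharper. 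With your choices the entropy $2k\log(en/k)+4k\log 9$ is indeed dominated by $(n-1)(9\delta/32)^2$ under the hypothesis~\eqref{eq.choice of k for IU rip}, and the final exponent comes out comfortably below $-\delta^2 n/256$ (the prefactor $8$ is not even needed). So your bookkeeping, while left implicit, does close; the paper's more elaborate choices are not essential.
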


This is perhaps not surprising, considering various choices of structured random matrices are known to form restricted isometries with high probability~\cite{CandesT:06,RudelsonV:08,Rauhut:10,KrahmerMR:14,NelsonPW:14,Bourgain:14,BandeiraFMM:arxiv14}.
To prove Theorem~\ref{thm.iu rip}, we show that the structured matrix enjoys restricted orthogonality with high probability, and then appeal to Lemma~\ref{lemma.ro to rip}.
Before proving this result, we first motivate it by proving the desired uncertainty principle:

\begin{proof}[Proof of Theorem~\ref{thm.additive up}]
Take $k=\lfloor n/60000\rfloor$ and $\delta=1/4$.
We will show $\operatorname{ns}(x)+\operatorname{ns}(Ux)\geq k/9$ for every nonzero $x\in\mathbb{C}^n$.
If $k=0$, the result is immediate, and so $n\geq60000$ without loss of generality.
In this regime, we have $k\in[n/120000,n/60000]$, and so
\[
\frac{256}{\delta^2}k\log\bigg(\frac{en}{k}\bigg)
\leq4096\log(120000e)\cdot k
\leq60000k
\leq n.
\]
Theorem~\ref{thm.iu rip} and Lemma~\ref{lemma.rip to wp} then give that $[I~U]$ satisfies the $(k,3)$-width property with probability $\geq1-8e^{-n/4096}$.
Observe that $z=[Ux;-x]$ resides in the nullspace of $[I~U]$ regardless of $x\in\mathbb{C}^n$.
In the case where $x$ (and therefore $z$) is nonzero, the width property and the arithmetic mean--geometric mean inequality together give
\[
\frac{k}{9}
\leq\frac{\|z\|_1^2}{\|z\|_2^2}
=\frac{(\|x\|_1+\|Ux\|_1)^2}{\|x\|_2^2+\|Ux\|_2^2}
=\frac{\|x\|_1^2+2\|x\|_1\|Ux\|_1+\|Ux\|_1^2}{2\|x\|_2^2}
\leq\operatorname{ns}(x)+\operatorname{ns}(Ux).\qedhere
\]
\end{proof}

\begin{proof}[Proof of Theorem~\ref{thm.iu rip}]
Take $[I~U]=[\varphi_1\cdots\varphi_{2n}]$, and let $k$ be the largest integer satisfying $\eqref{eq.choice of k for IU rip}$.
We will demonstrate that $[I~U]$ satisfies the $(k,\delta)$-restricted isometry property, which will then imply the $(k',\delta)$-restricted isometry property for all $k'<k+1$, and therefore all $k$ satisfying \eqref{eq.choice of k for IU rip}.
To this end, define the random quantities
\[
\theta^\star(U):=\max_{\substack{x,y\in\mathbb{C}^{2n}\\\|x\|_0,\|y\|_0\leq k\\\operatorname{supp}(x)\cap\operatorname{supp}(y)=\emptyset}}\frac{|\langle \Phi x,\Phi y\rangle|}{\|x\|_2\|y\|_2},
\qquad
\theta(U):=\max_{\substack{x,y\in\mathbb{C}^{2n}\\\|x\|_0,\|y\|_0\leq k\\\operatorname{supp}(x)\subseteq[n]\\\operatorname{supp}(y)\subseteq[n]^c}}\frac{|\langle\Phi x,\Phi y\rangle|}{\|x\|_2\|y\|_2}.
\]
We first claim that $\theta^\star(U)\leq\theta(U)$.
To see this, for any $x,y$ satisfying the constraints in $\theta^\star(U)$, decompose $x=x_1+x_2$ so that $x_1$ and $x_2$ are supported in $[n]$ and $[n]^c$, respectively, and similarly $y=y_1+y_2$.
For notational convenience, let $S$ denote the set of all $4$-tuples $(a,b,c,d)$ of $k$-sparse vectors in $\mathbb{C}^{2n}$ such that $a$ and $b$ are disjointly supported in $[n]$, while $c$ and $d$ are disjointly supported in $[n]^c$.
Then $(x_1,y_1,x_2,y_2)\in S$.
Since $\operatorname{supp}(x)$ and $\operatorname{supp}(y)$ are disjoint, and since $I$ and $U$ each have orthogonal columns, we have
\[
\langle \Phi x,\Phi y\rangle
=\langle \Phi x_1,\Phi y_2\rangle+\langle \Phi x_2,\Phi y_1\rangle.
\]
As such, the triangle inequality gives
\begin{align*}
\theta^\star(U)
&=\max_{\substack{x,y\in\mathbb{C}^{2n}\\\|x\|_0,\|y\|_0\leq k\\\operatorname{supp}(x)\cap\operatorname{supp}(y)=\emptyset}}\frac{|\langle \Phi x_1,\Phi y_2\rangle+\langle \Phi x_2,\Phi y_1\rangle|}{\|x\|_2\|y\|_2}\\
&\leq\max_{(x_1,y_1,x_2,y_2)\in S}\frac{|\langle \Phi x_1,\Phi y_2\rangle|+|\langle\Phi x_2,\Phi y_1\rangle|}{\sqrt{\|x_1\|_2^2+\|x_2\|_2^2}\sqrt{\|y_1\|_2^2+\|y_2\|_2^2}}\\
&\leq\bigg(\max_{(x_1,y_1,x_2,y_2)\in S}\frac{\|x_1\|_2\|y_2\|_2+\|x_2\|_2\|y_1\|_2}{\sqrt{\|x_1\|_2^2+\|x_2\|_2^2}\sqrt{\|y_1\|_2^2+\|y_2\|_2^2}}\bigg)\theta(U)\\
&\leq\theta(U),
\end{align*}
where the last step follows from squaring and applying the arithmetic mean--geometric mean inequality:
\[
\bigg(\frac{\sqrt{ad}+\sqrt{bc}}{\sqrt{(a+b)(c+d)}}\bigg)^2
=\frac{ad+bc+2\sqrt{acbd}}{(a+b)(c+d)}
\leq\frac{ad+bc+(ac+bd)}{(a+b)(c+d)}
=1.
\]
At this point, we seek to bound the probability that $\theta(U)$ is large.
First, we observe an equivalent expression:
\[
\theta(U)=\max_{\substack{x,y\in\mathbb{C}^n\\\|x\|_2=\|y\|_2=1\\\|x\|_0,\|y\|_0\leq k}}|\langle x,Uy\rangle|.
\]
To estimate the desired probability, we will pass to an $\epsilon$-net $\mathcal{N}_\epsilon$ of $k$-sparse vectors with unit $2$-norm.
A standard volume-comparison argument gives that the unit sphere in $\mathbb{R}^{m}$ enjoys an $\epsilon$-net of size $\leq(1+2/\epsilon)^m$ (see Lemma~5.2 in~\cite{Vershynin:12}).
As such, for each choice of $k$ coordinates, we can cover the corresponding copy of the unit sphere in $\mathbb{C}^k=\mathbb{R}^{2k}$ with $\leq(1+2/\epsilon)^{2k}$ points, and unioning these produces an $\epsilon$-net of size
\[
|\mathcal{N}_\epsilon|
\leq\binom{n}{k}\bigg(1+\frac{2}{\epsilon}\bigg)^{2k}.
\]
To apply this $\epsilon$-net, we note that $\|x-x'\|_2,\|y-y'\|_2\leq\epsilon$ and $\|x'\|_2=\|y'\|_2=1$ together imply
\begin{align*}
|\langle x,Uy\rangle|
&=|\langle x'+x-x',U(y'+y-y')\rangle|\\
&\leq |\langle x',Uy'\rangle|+\|x-x'\|_2+\|y-y'\|_2+\|x-x'\|_2\|y-y'\|_2\\
&\leq |\langle x',Uy'\rangle|+3\epsilon,
\end{align*}
where the last step assumes $\epsilon\leq1$.
As such, the union bound gives
\begin{align}
\nonumber
\operatorname{Pr}(\theta(U)>t)
&=\operatorname{Pr}\bigg(\exists x,y\in\mathbb{C}^n,~\|x\|_2=\|y\|_2=1,~\|x\|_0,\|y\|_0\leq k~\mbox{s.t.}~|\langle x,Uy\rangle|>t\bigg)\\
\nonumber
&\leq\operatorname{Pr}\bigg(\exists x,y\in\mathcal{N}_\epsilon~\mbox{s.t.}~|\langle x,Uy\rangle|>t-3\epsilon\bigg)\\
\nonumber
&\leq\sum_{x,y\in\mathcal{N}_\epsilon}\operatorname{Pr}\Big(|\langle x,Uy\rangle|>t-3\epsilon\Big)\\
\label{eq.unitary bound 1}
&=\binom{n}{k}^2\bigg(1+\frac{2}{\epsilon}\bigg)^{4k}\operatorname{Pr}\Big(|\langle e_1,Ue_1\rangle|>t-3\epsilon\Big),
\end{align}
where the last step uses the fact that the distribution of $U$ is invariant under left- and right-multiplication by any deterministic unitary matrix (e.g., unitary matrices that send $e_1$ to $x$ and $y$ to $e_1$, respectively).
It remains to prove tail bounds on $U_{11}:=\langle e_1,Ue_1\rangle$.
First, we apply the union bound to get
\begin{equation}
\label{eq.unitary bound 2}
\operatorname{Pr}(|U_{11}|>u)
\leq\operatorname{Pr}\bigg(|\operatorname{Re}(U_{11})|>\frac{u}{\sqrt{2}}\bigg)+\operatorname{Pr}\bigg(|\operatorname{Im}(U_{11})|>\frac{u}{\sqrt{2}}\bigg)
=4\operatorname{Pr}\bigg(\operatorname{Re}(U_{11})>\frac{u}{\sqrt{2}}\bigg),
\end{equation}
where the last step uses the fact that $\operatorname{Re}(U_{11})$ has even distribution.
Next, we observe that $\operatorname{Re}(U_{11})$ has the same distribution as $g/\sqrt{h}$, where $g$ has standard normal distribution and $h$ has chi-squared distribution with $2n$ degrees of freedom.
Indeed, this can be seen from one method of constructing the matrix $U$:
Start with an $n\times n$ matrix $G$ with iid $N(0,1)+iN(0,1)$ complex Gaussian entries and apply Gram--Schmidt to the columns; the first column of $U$ is then the first column of $G$ divided by its norm $\sqrt{h}$.
Let $s>0$ be arbitrary (to be selected later).
Then $g/\sqrt{h}>u/\sqrt{2}$ implies that either $g>\sqrt{s}u/\sqrt{2}$ or $h<s$.
As such, the union bound implies
\begin{equation}
\label{eq.unitary bound 3}
\operatorname{Pr}\bigg(\operatorname{Re}(U_{11})>\frac{u}{\sqrt{2}}\bigg)
\leq2\max\bigg\{\operatorname{Pr}\bigg(g>\sqrt{s}\frac{u}{\sqrt{2}}\bigg),\operatorname{Pr}(h<s)\bigg\}.
\end{equation}
For the first term, Proposition~7.5 in~\cite{FoucartR:13} gives
\begin{equation}
\label{eq.unitary bound 4}
\operatorname{Pr}\bigg(g>\sqrt{s}\frac{u}{\sqrt{2}}\bigg)
\leq e^{-su^2/4}.
\end{equation}
For the second term, Lemma~1 in~\cite{LaurentM:00} gives $\operatorname{Pr}(h<2n-\sqrt{8nx})\leq e^{-x}$ for any $x>0$.
Picking $x=(2n-s)^2/(8n)$ then gives
\begin{equation}
\label{eq.unitary bound 5}
\operatorname{Pr}(h<s)\leq e^{-(2n-s)^2/(8n)}.
\end{equation}
We use the estimate $\binom{n}{k}\leq(en/k)^k$ when combining \eqref{eq.unitary bound 1}--\eqref{eq.unitary bound 5} to get
\[
\log\Big(\operatorname{Pr}(\theta(U)>t)\Big)
\leq 2k\log\bigg(\frac{en}{k}\bigg)+4k\log\bigg(1+\frac{2}{\epsilon}\bigg)+\log8-\min\bigg\{\frac{s(t-3\epsilon)^2}{4},\frac{(2n-s)^2}{8n}\bigg\}.
\]
Notice $n/k\geq(256/\delta^2)\log(en/k)\geq256$ implies that taking $\epsilon=\sqrt{(k/n)\log(en/k)}$ gives
\[
\sqrt{\frac{en}{k}}-\frac{2}{\epsilon}
=\bigg(1-\frac{2}{\sqrt{e\log(n/k)}}\bigg)\sqrt{\frac{en}{k}}
\geq\bigg(1-\frac{2}{\sqrt{e\log(256)}}\bigg)\sqrt{256e}
\geq1,
\]
which can be rearranged to get
\[
\log\bigg(1+\frac{2}{\epsilon}\bigg)
\leq\frac{1}{2}\log\bigg(\frac{en}{k}\bigg).
\]
As such, we also pick $s=n$ and $t=\sqrt{(64k/n)\log(en/k)}$ to get
\[
\log\Big(\operatorname{Pr}(\theta(U)>t)\Big)
\leq 4k\log\bigg(\frac{en}{k}\bigg)+\log 8-\frac{25}{4}k\log\bigg(\frac{en}{k}\bigg)
\leq \log 8-2k\log\bigg(\frac{en}{k}\bigg).
\]
Since we chose $k$ to be the largest integer satisfying \eqref{eq.choice of k for IU rip}, we therefore have $\theta(U)\leq\sqrt{(64k/n)\log(n/k)}$ with probability $\geq1-8e^{-\delta^2n/256}$.
Lemma~\ref{lemma.ro to rip} then gives the result.
\end{proof}

\section{Low uncertainty with the discrete Fourier transform}

In this section, we study functions which achieve either exact or near equality in our multiplicative uncertainty principle \eqref{eq.main 2} in the case where the unitary matrix $U$ is the discrete Fourier transform.

\subsection{Exact equality in the multiplicative uncertainty principle}

We seek to understand when equality is achieved in \eqref{eq.main 2} in the special case of the discrete Fourier transform.
For reference, the analogous result for \eqref{eq.multiplicative tao up} is already known:

\begin{theorem}[Theorem~13 in~\cite{DonohoS:89}]
\label{thm.equality in old}
Suppose $x\in\ell(\mathbb{Z}_n)$ satisfies $\|x\|_0\|Fx\|_0=n$.
Then $x$ has the form $x=cT^aM^b1_K$, where $c\in\mathbb{C}$, $K$ is a subgroup of $\mathbb{Z}_n$, and $T,M\colon\ell(\mathbb{Z}_n)\rightarrow\ell(\mathbb{Z}_n)$ are translation and modulation operators defined by
\[
(Tx)[j]:=x[j-1],
\qquad
(Mx)[j]:=e^{2\pi ij/n}x[j]
\qquad
\forall j\in\mathbb{Z}_n.
\]
Here, $i$ denotes the imaginary unit $\sqrt{-1}$.
\end{theorem}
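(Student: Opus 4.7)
The plan is to extract equality conditions from the proof of Theorem~\ref{thm.tao up} and then use a short character-theoretic argument to pin down the structure of $\operatorname{supp}(x)$.

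First I would note that the hypothesis $\|x\|_0\|Fx\|_0=n$ forces equality throughout the inequality chain in the proof sketch of \eqref{eq.multiplicative tao up}. Equality in the Cauchy--Schwarz step $\|x\|_1\leq\sqrt{\|x\|_0}\|x\|_2$ means that $|x|$ is constant on $S:=\operatorname{supp}(x)$, while equality in $\|Fx\|_\infty\leq\frac{1}{\sqrt{n}}\|x\|_1$ means that there is a character $\chi_0[j]=e^{2\pi ibj/n}$ for which the values $x[j]\overline{\chi_0[j]}$, $j\in S$, all share a common phase (by the equality case of the triangle inequality applied to $|\sum_j x[j]\overline{\chi_0[j]}|=\sum_j|x[j]|$). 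Combining these, $x=cM^{b}1_S$ for some $c\in\mathbb{C}$ and $b\in\mathbb{Z}_n$. Equality in the last step $\|Fx\|_2\leq\sqrt{\|Fx\|_0}\|Fx\|_\infty$ additionally forces $|Fx|$ to be constant on $T:=\operatorname{supp}(Fx)$; since $Fx=cT^{b}F(1_S)$ by the intertwining $FM^b=T^bF$, this constancy transfers to $|F(1_S)|$, so $|F(1_S)|=\alpha\,1_{T'}$ for some scalar $\alpha>0$ and some translate $T'$ of $T$ with $|T'|=n/|S|$. Parseval then pins down $\alpha=|S|/\sqrt{n}$.

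The main obstacle is showing that $S$ is a coset of a subgroup. For each $\chi\in T'$, the equation $|F(1_S)[\chi]|=\alpha$ rearranges to $|\sum_{j\in S}\overline{\chi[j]}|=|S|$, and equality in the triangle inequality on unit-modulus summands forces $\overline{\chi[j]}$ to be constant over $j\in S$; equivalently, every $\chi\in T'$ annihilates the subgroup $H:=\langle S-S\rangle$, so $T'\subseteq H^{\perp}$. A size count then gives $n/|S|=|T'|\leq|H^{\perp}|=n/|H|$, hence $|H|\leq|S|$, while the reverse inequality is immediate from $H\supseteq S-s_0$ for any fixed $s_0\in S$. Therefore $|H|=|S|$, $S=s_0+H$ is a coset of $H$, and $1_S=T^{s_0}1_H$. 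Finally, the commutation relation $M^bT^{s_0}=e^{2\pi ibs_0/n}T^{s_0}M^b$ lets me absorb a phase into the constant and write $x=c''T^{s_0}M^{b}1_H$, which is the desired form with $K:=H$ and $a:=s_0$.
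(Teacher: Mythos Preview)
The paper does not supply its own proof of this theorem; it is simply quoted from Donoho--Stark~\cite{DonohoS:89}, so there is no in-paper argument to compare against. Your argument is correct and follows the classical route: tracing equality through the chain in the proof sketch of~\eqref{eq.multiplicative tao up} forces $|x|$ and $|Fx|$ to be constant on their supports and $x$ to equal $cM^b1_S$, after which the annihilator/size count $n/|S|=|T'|\leq|H^\perp|=n/|H|$ together with $S-s_0\subseteq H$ pins $S$ down as a coset. One cosmetic remark: you use the symbol $T$ both for the translation operator (as in the theorem statement) and for $\operatorname{supp}(Fx)$, which makes later expressions like $T^b$ momentarily ambiguous; renaming the support set would clean this up.
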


In words, equality is achieved in \eqref{eq.multiplicative tao up} by indicator functions of subgroups, namely, the so-called Dirac combs (as well as their scalar multiples, translations, modulations).
We seek an analogous characterization for our uncertainty principle \eqref{eq.main 2}.
Surprisingly, the characterization is identical:

\begin{theorem}\label{thm.equal}
Suppose $x\in\ell(\mathbb{Z}_n)$.
Then $\operatorname{ns}(x)\operatorname{ns}(Fx)=n$ if and only if $\|x\|_0\|Fx\|_0=n$.
\end{theorem}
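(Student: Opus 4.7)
The plan is to prove each direction by tracking when the inequalities from the existing chain in \eqref{eq.proof of main result} become tight. The reverse direction is immediate: applying the multiplicative uncertainty principle \eqref{eq.main 1} with $U=F$ (so that $\|F\|_{1\rightarrow\infty} = 1/\sqrt{n}$) together with the bound $\operatorname{ns}(y) \leq \|y\|_0$ from \eqref{eq.ns vs 0norm} gives the sandwich
\[
n \leq \operatorname{ns}(x)\operatorname{ns}(Fx) \leq \|x\|_0\,\|Fx\|_0.
\]
Thus the hypothesis $\|x\|_0\|Fx\|_0 = n$ forces equality throughout, and in particular $\operatorname{ns}(x)\operatorname{ns}(Fx) = n$.

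For the forward direction, suppose $\operatorname{ns}(x)\operatorname{ns}(Fx) = n$. Revisiting \eqref{eq.proof of main result} with $U = F$, equality must hold in each of the three intermediate steps: (a) $\|Fx\|_1\|Fx\|_\infty \geq \|Fx\|_2^2$; (b) Parseval's identity $\|Fx\|_2 = \|x\|_2$; and (c) $\|Fx\|_\infty \leq \|x\|_1/\sqrt{n}$. Step (b) is automatic. Step (a) is tight iff $|Fx|$ is constant on $\operatorname{supp}(Fx)$, which is precisely the equality case of \eqref{eq.ns vs 0norm} applied to $Fx$, yielding $\operatorname{ns}(Fx) = \|Fx\|_0$.

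To produce the matching identity $\operatorname{ns}(x) = \|x\|_0$, I would invoke symmetry rather than try to mine modulus information out of the phase-alignment condition in step (c). The principle \eqref{eq.main 1} applies to any unitary, in particular to $U = F^{-1}$, whose entries have the same moduli as $F$ and hence $\|F^{-1}\|_{1\rightarrow\infty} = 1/\sqrt{n}$. Setting $y := Fx$, the hypothesis rewrites as $\operatorname{ns}(y)\operatorname{ns}(F^{-1}y) = n$, so the same equality analysis applied to $y$ with $F^{-1}$ in place of $F$ yields that $|F^{-1}y| = |x|$ is constant on $\operatorname{supp}(x)$, i.e.\ $\operatorname{ns}(x) = \|x\|_0$. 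Multiplying the two identities then gives $\|x\|_0\|Fx\|_0 = \operatorname{ns}(x)\operatorname{ns}(Fx) = n$.

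The main obstacle lies in this forward direction. Although equality in step (a) cleanly returns the constant-modulus property for $Fx$, the analogous property for $x$ does not fall out of step (c) directly: tightness there forces the nonzero values of $x$ to have phases aligned with a single character, which pins down their arguments but not their magnitudes. The symmetry maneuver above sidesteps this entirely, and is much cleaner than trying to combine the phase-alignment information with the constant-modulus condition on $Fx$ by direct computation (an approach that would effectively reprove parts of Theorem~\ref{thm.equality in old}).
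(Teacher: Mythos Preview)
Your argument is correct and follows essentially the same route as the paper: both directions use the sandwich $n \le \operatorname{ns}(x)\operatorname{ns}(Fx) \le \|x\|_0\|Fx\|_0$ for $(\Leftarrow)$, and for $(\Rightarrow)$ both apply the equality analysis of \eqref{eq.proof of main result} twice---once to $x$ with $U=F$ and once to $y=Fx$ with $U=F^{-1}$---so that the H\"older step yields constant modulus on the support for each of $x$ and $Fx$, whence $\operatorname{ns}=\|\cdot\|_0$ in both cases. Your explicit remark that step~(c) alone controls only phases, not magnitudes, and that the symmetry maneuver is what circumvents this, is a nice clarification of why the second application is needed.
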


\begin{proof}
($\Leftarrow$)
This follows directly from \eqref{eq.ns vs 0norm}, along with Theorems~\ref{thm.tao up} and~\ref{thm:main}.

($\Rightarrow$)
It suffices to show that $\operatorname{ns}(x)=\|x\|_0$ and $\operatorname{ns}(Fx)=\|Fx\|_0$.
Note that both $F$ and $F^{-1}$ are unitary operators and $\|F\|_{1\rightarrow\infty}^2=\|F^{-1}\|_{1\rightarrow\infty}^2=1/n$. 
By assumption, taking $y:=Fx$ then gives
\[
\operatorname{ns}(F^{-1}y)\operatorname{ns}(y)=\operatorname{ns}(x)\operatorname{ns}(Fx)=n.
\]
We will use the fact that $x$ and $y$ each achieve equality in the first part of Theorem~\ref{thm:main} with $U=F$ and $U=F^{-1}$, respectively.
Notice from the proof \eqref{eq.proof of main result} that equality occurs only if $x$ and $y$ satisfy equality in H\"{o}lder's inequality, that is,
\begin{equation}
\label{eq: first}
\|x\|_1\|x\|_{\infty}=\|x\|_2^2,\qquad \|y\|_1\|y\|_{\infty}=\|y\|_2^2.
\end{equation}
To achieve the first equality in \eqref{eq: first},
\[
\sum_{j\in\mathbb{Z}_n}|x[j]|^2
=\|x\|_2^2
=\|x\|_1\|x\|_{\infty}
=\sum_{j\in \mathbb{Z}_n}|x[j]|\max_{k\in\mathbb{Z}_n}|x[k]|.
\]
This implies that $|x[j]|=\max_k|x[k]|$ for every $j$ with $x[j]\neq0$.
Similarly, in order for the second equality in \eqref{eq: first} to hold, $|y[j]|=\max_k|y[k]|$ for every $j$ with $y[j]\neq0$.
As such, $|x|=a1_A$ and $|y|=b1_B$ for some $a,b>0$ and $A,B\subseteq\mathbb{Z}_n$.
Then
\[
\operatorname{ns}(x)
=\frac{\|x\|_1^2}{\|x\|_2^2}
=\frac{(a|A|)^2}{a^2|A|}
=|A|
=\|x\|_0,
\]
and similarly, $\operatorname{ns}(y)=\|y\|_0$.
\end{proof}

\subsection{Near equality in the multiplicative uncertainty principle}
\label{subsection.near equality}

Having established that equality in the new multiplicative uncertainty principle \eqref{eq.main 1} is equivalent to equality in the analogous principle \eqref{eq.multiplicative tao up}, we wish to separate these principles by focusing on near equality.
For example, in the case where $n$ is prime, $\mathbb{Z}_n$ has no nontrivial proper subgroups, and so by Theorem~\ref{thm.equality in old}, equality is only possible with identity basis elements and complex exponentials.
On the other hand, we expect the new principle to accommodate nearly sparse vectors, and so we appeal to the discrete Gaussian depicted in Figure~\ref{figure.discrete gaussian}:

\begin{theorem}
\label{thm.neareqmnup}
Define $x\in\ell(\mathbb{Z}_n)$ by
\begin{equation}
\label{eq.discrete gaussian}
x[j]
:=\sum_{j'\in\mathbb{Z}}e^{-n\pi(\frac{j}{n}+j')^2}
\qquad
\forall j\in\mathbb{Z}_n.
\end{equation}
Then $Fx=x$ and $\operatorname{ns}(x)\operatorname{ns}(Fx)\leq (2+o(1))n$.
\end{theorem}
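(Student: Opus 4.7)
The plan is to first verify the assertion (stated in the text) that $Fx=x$, and then reduce the problem to estimating a single numerical sparsity, namely $\operatorname{ns}(x)^2$. Both $\|x\|_1$ and $\|x\|_2^2$ are handled by recognizing them as Gaussian lattice sums that telescope into one-dimensional theta series, to which the Jacobi functional equation (equivalently, Poisson summation) applies.

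First I would verify $Fx=x$. Writing $x[j]=\sum_{j'\in\mathbb{Z}}e^{-n\pi(j/n+j')^2}$, one notes that $x$ is the restriction to the lattice $\tfrac{1}{n}\mathbb{Z}/\mathbb{Z}$ of the $1$-periodization of the continuous Gaussian $f(t)=e^{-n\pi t^2}$. Since $\hat f(\xi)=\tfrac{1}{\sqrt{n}}e^{-\pi\xi^2/n}$ and $f$ is self-reciprocal under an appropriate rescaling, applying Poisson summation to the doubly indexed sum defining $(Fx)[k]$ produces exactly the same expression as $x[k]$. This identity then reduces the target quantity to $\operatorname{ns}(x)^2$.

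Next I would compute $\|x\|_1$. Because $x[j]\ge 0$, the sum $\|x\|_1=\sum_{j=0}^{n-1}x[j]$ unfolds into a single sum over all integers via the substitution $k=j+nj'$, giving
\[
\|x\|_1=\sum_{k\in\mathbb{Z}}e^{-\pi k^2/n}.
\]
The Jacobi theta transformation (a direct application of Poisson summation) yields $\sum_{k}e^{-\pi k^2/n}=\sqrt{n}\sum_{m}e^{-\pi n m^2}=\sqrt{n}(1+o(1))$, since the $m\ne0$ terms are $O(e^{-\pi n})$. Hence $\|x\|_1^2=n(1+o(1))$.

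For $\|x\|_2^2$ I would expand the square and reorganize by the diagonal relation $k_1\equiv k_2\pmod{n}$, writing $k_2=k_1+nm$, to obtain
\[
\|x\|_2^2=\sum_{m\in\mathbb{Z}}\sum_{k\in\mathbb{Z}}e^{-\pi[k^2+(k+nm)^2]/n}
=\sum_{m\in\mathbb{Z}}e^{-\pi nm^2/2}\sum_{k\in\mathbb{Z}}e^{-2\pi(k+nm/2)^2/n},
\]
after completing the square in $k$. Each inner sum is, up to a translation that is irrelevant by Poisson summation, equal to $\sqrt{n/2}(1+o(1))$, and the outer sum is $1+O(e^{-\pi n/2})$. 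Thus $\|x\|_2^2=\sqrt{n/2}\,(1+o(1))$. Combining with the previous step gives
\[
\operatorname{ns}(x)\operatorname{ns}(Fx)=\operatorname{ns}(x)^2=\frac{\|x\|_1^4}{\|x\|_2^4}=\frac{n^2(1+o(1))}{n/2\,(1+o(1))}=(2+o(1))n.
\]

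The main obstacle is the careful bookkeeping in the $\|x\|_2^2$ computation: the expansion of the square produces cross terms between distinct periodization shifts, and one must verify that Poisson summation correctly identifies the dominant contribution $\sqrt{n/2}$ while showing that the $m\neq 0$ contributions are exponentially suppressed. Everything else is routine Gaussian calculus.
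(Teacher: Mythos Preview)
Your proposal is correct, and the overall architecture (show $Fx=x$, then estimate $\|x\|_1$ and $\|x\|_2$ separately) matches the paper, but the execution differs. The paper proceeds by crude inequalities: for $\|x\|_2^2$ it simply drops the off-diagonal terms $j''\neq j'$ in the expanded square to get $\|x\|_2^2\geq\sum_{k\in\mathbb{Z}}e^{-2\pi k^2/n}\geq\sqrt{n/2}-1$ via an integral comparison, and for $\|x\|_1$ it bounds $\sum_{k}e^{-\pi k^2/n}\leq\sqrt{n}+1$ the same way. This yields only the one-sided estimate $\operatorname{ns}(x)\leq(\sqrt{2}+o(1))\sqrt{n}$, which is all the theorem claims. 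Your approach instead evaluates both quantities asymptotically via the Jacobi/Poisson transformation, obtaining $\|x\|_1=\sqrt{n}(1+o(1))$ and $\|x\|_2^2=\sqrt{n/2}(1+o(1))$, and hence the two-sided asymptotic $\operatorname{ns}(x)\operatorname{ns}(Fx)=(2+o(1))n$. This is strictly stronger---indeed it confirms the paper's remark that numerical evidence suggests the constant $2$ is optimal for the discrete Gaussian---at the cost of a slightly heavier computation for $\|x\|_2^2$ (the completion of the square and the handling of the shift $nm/2$ in the inner theta sum, which you correctly identify as the only delicate point).
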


In words, the discrete Gaussian achieves near equality in the uncertainty principle \eqref{eq.main 1}.
Moreover, numerical evidence suggests that $\operatorname{ns}(x)\operatorname{ns}(Fx)=(2+o(1))n$, i.e., the $2$ is optimal for the discrete Gaussian.
Note that this does not depend on whether $n$ is prime or a perfect square.
Recall that a function $f\in C^{\infty}(\mathbb{R})$ is \textbf{Schwarz} if $\sup_{x\in\mathbb{R}}|x^{\alpha} f^{(\beta)}(x)|<\infty$ for every pair of nonnegative integers $\alpha$ and $\beta$.
We use this to quickly prove a well-known lemma that will help us prove Theorem~\ref{thm.neareqmnup}:

\begin{lemma}
\label{lemma.dft of periodized and discretized functions}
Suppose $f\in C^{\infty}(\mathbb{R})$ is Schwarz and construct a discrete function $x\in\ell(\mathbb{Z}_n)$ by periodizing and sampling $f$ as follows:
\begin{equation}
\label{eq.discretization}
x[j]=\sum_{j'\in\mathbb{Z}}f\bigg(\frac{j}{n}+j'\bigg)
\qquad
\forall j\in\mathbb{Z}_n.
\end{equation}
Then the discrete Fourier transform of $x$ is determined by $\hat{f}(\xi):=\int_{-\infty}^\infty f(t)e^{-2\pi i \xi t}dt$:
\begin{equation*}
(Fx)[k]=\sqrt{n}\sum_{k'\in\mathbb{Z}}\hat{f}(k+k'n)
\qquad
\forall k\in\mathbb{Z}_n.
\end{equation*}
\end{lemma}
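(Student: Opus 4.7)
The plan is to recognize the right-hand side of \eqref{eq.discretization} as the value at $t=j/n$ of the $1$-periodic periodization $g(t):=\sum_{j'\in\mathbb{Z}}f(t+j')$, and then to expand $g$ via Poisson summation before computing the DFT entrywise. Since $f$ is Schwartz, the series defining $g$ converges absolutely and uniformly on compact sets, so $g\in C^\infty(\mathbb{R})$ is well-defined and $1$-periodic. I would compute its $\ell$-th Fourier coefficient by the substitution $u=t+j'$, unfolding the sum over $j'$ together with the integral over $[0,1]$ into a single integral over $\mathbb{R}$; this identifies the coefficient as $\hat{f}(\ell)$. Because $\hat{f}$ is also Schwartz, the resulting expansion $g(t)=\sum_{\ell\in\mathbb{Z}}\hat{f}(\ell)e^{2\pi i\ell t}$ converges absolutely, hence holds pointwise at every sample $t=j/n$.

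Next I would substitute $x[j]=g(j/n)$ into the DFT. Using absolute convergence of $\sum_\ell|\hat f(\ell)|$ to swap the order of summation, I get
\[
(Fx)[k]=\frac{1}{\sqrt{n}}\sum_{j=0}^{n-1}g(j/n)\,e^{-2\pi i jk/n}=\frac{1}{\sqrt{n}}\sum_{\ell\in\mathbb{Z}}\hat{f}(\ell)\sum_{j=0}^{n-1}e^{2\pi i j(\ell-k)/n}.
\]
The inner geometric sum is the standard character-orthogonality identity on $\mathbb{Z}_n$: it equals $n$ when $\ell\equiv k\pmod n$ and $0$ otherwise. Retaining only the surviving indices $\ell=k+k'n$ and absorbing the factor $n/\sqrt{n}=\sqrt{n}$ yields precisely $\sqrt{n}\sum_{k'\in\mathbb{Z}}\hat{f}(k+k'n)$, as claimed.

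There is no serious obstacle here; the argument is essentially Poisson summation written out carefully. The only points that warrant care are the absolute convergence of the defining sums for $g$ and for its Fourier series (both controlled by Schwartz decay of $f$ and $\hat f$), and the justification of the double-sum interchange (immediate from $\sum_\ell|\hat f(\ell)|<\infty$). Everything else is a direct computation.
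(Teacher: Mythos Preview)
Your proposal is correct and follows essentially the same route as the paper: express $x[j]$ via Poisson summation as $\sum_{\ell\in\mathbb{Z}}\hat f(\ell)e^{2\pi i j\ell/n}$, then evaluate the DFT using the geometric sum (character orthogonality on $\mathbb{Z}_n$). The only difference is cosmetic---you spell out the derivation of the Poisson identity by computing the Fourier coefficients of the periodization $g$, whereas the paper simply invokes Poisson summation as a known result.
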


\begin{proof}
Since $f$ is Schwarz, we may apply the Poisson summation formula:
\[
x[j]
=\sum_{j'\in\mathbb{Z}}f\bigg(\frac{j}{n}+j'\bigg)
=\sum_{l\in\mathbb{Z}}\hat{f}(l)e^{2\pi ijl/n}.
\]
Next, the geometric sum formula gives
\begin{align*}
(Fx)[k]
&=\frac{1}{\sqrt{n}}\sum_{j\in\mathbb{Z}_n}\bigg(\sum_{l\in\mathbb{Z}}\hat{f}(l)e^{2\pi ijl/n}\bigg)e^{-2\pi ijk/n}\\
&=\frac{1}{\sqrt{n}}\sum_{l\in\mathbb{Z}}\hat{f}(l)\sum_{j\in\mathbb{Z}_n}\Big(e^{2\pi i(l-k)/n}\Big)^j
=\sqrt{n}\sum_{\substack{l\in\mathbb{Z}\\l\equiv k\bmod n}}\hat{f}(l).
\end{align*}
The result then follows from a change of variables.
\end{proof}

\begin{proof}[Proof of Theorem~\ref{thm.neareqmnup}]
It is straightforward to verify that the function $f(t)=e^{-n\pi t^2}$ is Schwarz.
Note that defining $x$ according to \eqref{eq.discretization} then produces \eqref{eq.discrete gaussian}.
Considering $\hat{f}(\xi)=n^{-1/2}e^{-\pi \xi^2/n}$, one may use Lemma~\ref{lemma.dft of periodized and discretized functions} to quickly verify that $Fx=x$.
To prove Theorem~\ref{thm.neareqmnup}, it then suffices to show that $\operatorname{ns}(x)\leq(\sqrt{2}+o(1))\sqrt{n}$.
We accomplish this by bounding $\|x\|_2$ and $\|x\|_1$ separately.

To bound $\|x\|_2$, we first expand a square to get
\begin{equation*}
\|x\|_2^2
=\sum_{j\in \mathbb{Z}_n}\bigg(\sum_{j'\in\mathbb{Z}}e^{-n\pi(\frac{j}{n}+j')^2}\bigg)^2
=\sum_{j\in \mathbb{Z}_n}\sum_{j'\in\mathbb{Z}}\sum_{j''\in\mathbb{Z}}e^{-n\pi[(\frac{j}{n}+j')^2+(\frac{j}{n}+j'')^2]}.
\end{equation*}
Since all of the terms in the sum are nonnegative, we may infer a lower bound by discarding the terms for which $j''\neq j'$.
This yields the following:
\begin{equation*}
\|x\|_2^2
\geq \sum_{j\in\mathbb{Z}_n}\sum_{j'\in\mathbb{Z}}e^{-2n\pi(\frac{j}{n}+j')^2}
=\sum_{k\in\mathbb{Z}}e^{-2\pi k^2/n}
\geq\int_{-\infty}^{\infty}e^{-2\pi x^2/n}dx-1
=\sqrt{\frac{n}{2}}-1,
\end{equation*}
where the last inequality follows from an integral comparison.
Next, we bound $\|x\|_1$ using a similar integral comparison:
\[
\|x\|_1
=\sum_{j\in\mathbb{Z}_n}\sum_{j'\in\mathbb{Z}}e^{-n\pi(\frac{j}{n}+j')^2}
=\sum_{k\in\mathbb{Z}}e^{-\pi k^2/n}
\leq \int_{-\infty}^\infty e^{-\pi x^2/n}dx+1
=\sqrt{n}+1.
\]
Overall, we have
\[
\operatorname{ns}(x)
=\frac{\|x\|_1^2}{\|x\|_2^2}
\leq\frac{(\sqrt{n}+1)^2}{\sqrt{n/2}-1}
=(\sqrt{2}+o(1))\sqrt{n}.\qedhere
\]
\end{proof}

\section{Applications}

Having studied the new uncertainty principles in Theorem~\ref{thm:main}, we now take some time to identify certain consequences in various sparse signal processing applications.
In particular, we report consequences in sparse signal demixing, in compressed sensing with partial Fourier operators, and in the fast detection of sparse signals.

\subsection{Sparse signal demixing}
\label{subsection.demixing}

Suppose a signal $x$ is sparse in the Fourier domain and corrupted by noise $\epsilon$ which is sparse in the time domain (such as speckle).
The goal of demixing is to recover the original signal $x$ given the corrupted signal $z=x+\epsilon$; see~\cite{McCoyCDAB:14} for a survey of various related demixing problems.
Provided $Fx$ and $\epsilon$ are sufficiently sparse, it is known that this recovery can be accomplished by solving
\begin{equation}
\label{eq.demixing program}
v^{\star}\quad:=\quad\operatorname{argmin}\quad\|v\|_1\quad\text{subject to}\quad[I~F]v=Fz,
\end{equation}
where, if successful, the solution $v^{\star}$ is the column vector obtained by concatenating $Fx$ and $\epsilon$; see~\cite{SantosaS:86} for an early appearance of this sort of approach.
To some extent, we know how sparse $Fx$ and $\epsilon$ must be for this $\ell_1$ recovery method to succeed.
Coherence-based guarantees in~\cite{DonohoH:01,DonohoE:03,GribonvalN:07} show that it suffices for $v^{\star}$ to be $k$-sparse with $k=O(\sqrt{n})$, while restricted isometry--based guarantees~\cite{Candes:08,BaraniukDDW:08} allow for $k=O(n)$ if $[I~F]$ is replaced with a random matrix.
This disparity is known as the \textbf{square-root bottleneck}~\cite{Tropp:08b}.
In particular, does $[I~F]$ perform similarly to a random matrix, or is the coherence-based sufficient condition on $k$ also necessary?

In the case where $n$ is a perfect square, it is well known that the coherence-based sufficient condition is also necessary.
Indeed, let $K$ denote the subgroup of $\mathbb{Z}_n$ of size $\sqrt{n}$ and suppose $x=1_K$ and $\epsilon=-1_K$.
Then $[Fx;\epsilon]$ is $2\sqrt{n}$-sparse, and yet $z=0$, thereby forcing $v^\star=0$.
On the other hand, if $n$ is prime, then the additive uncertainty principle of Theorem~\ref{thm.tao up} implies that every member of the nullspace of $[I~F]$ has at least $n+1$ nonzero entries, and so $v^\star\neq0$ in this setting.
Still, considering Figure~\ref{figure.discrete gaussian}, one might expect a problem from a stability perspective.
In this section, we use numerical sparsity to show that $\Phi=[I~F]$ cannot break the square-root bottleneck, even if $n$ is prime.
To do this, we will make use of the following theorem:

\begin{theorem}[see~\cite{KashinT:07,CahillM:arxiv14}]
\label{thm.wps}
Denote $\Delta(y):=\operatorname{argmin}\|x\|_1$ subject to $\Phi x=y$.
Then
\begin{equation}
\label{eq.stability}
\|\Delta(\Phi x)-x\|_2\leq\frac{C}{\sqrt{k}}\|x-x_k\|_1\qquad\forall x\in\mathbb{R}^n
\end{equation}
if and only if $\Phi$ satisfies the $(k,c)$-width property.
Furthermore, $C\asymp c$ in both directions of the equivalence.
\end{theorem}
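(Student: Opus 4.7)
The plan is to prove the two implications separately. The converse direction (stability implies width) is essentially immediate, while the forward direction is a standard null-space/cone argument from the compressed sensing literature, with the only subtlety being the constant degradation.

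For the converse, I would take an arbitrary $h$ in the nullspace of $\Phi$ and observe that the $\ell_1$ minimizer of $\|v\|_1$ subject to $\Phi v = 0$ is $v = 0$, so $\Delta(\Phi h) = 0$. Plugging $x = h$ into the stability bound \eqref{eq.stability} then gives
\[
\|h\|_2 = \|\Delta(\Phi h) - h\|_2 \leq \frac{C}{\sqrt{k}}\|h - h_k\|_1 \leq \frac{C}{\sqrt{k}}\|h\|_1,
\]
which is the $(k, C)$-width property, so $c \leq C$.

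For the forward direction, let $x \in \mathbb{R}^n$, set $\hat{x} := \Delta(\Phi x)$, and define $h := \hat{x} - x \in \ker(\Phi)$. Let $T$ denote the support of $x_k$. The optimality $\|\hat{x}\|_1 \leq \|x\|_1$, combined with splitting each $\ell_1$ norm across $T$ and $T^c$ and applying the reverse triangle inequality, yields the familiar cone-type estimate $\|h_{T^c}\|_1 \leq \|h_T\|_1 + 2\|x - x_k\|_1$. Using $\|h_T\|_1 \leq \sqrt{k}\,\|h\|_2$ (since $|T| \leq k$) then gives $\|h\|_1 \leq 2\sqrt{k}\,\|h\|_2 + 2\|x - x_k\|_1$. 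Inserting this into the width property $\|h\|_2 \leq (c/\sqrt{k})\|h\|_1$ and rearranging produces the stability bound with a constant that is $O(c)$ when $c$ is bounded away from $1/2$.

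The main obstacle is securing the sharp claim $C \asymp c$ across the full relevant range, because the elementary absorption step above forces $c < 1/2$ and blows up as $c \to 1/2$. To recover the clean constant relationship (as in the proof of Theorem~11 in~\cite{CahillM:arxiv14}, cf.~\cite{KashinT:07}), I would refine the forward argument by decomposing $h$ dyadically into blocks of its top coordinates and controlling the tail $\|h - h_k\|_2$ via a Stechkin-type inequality of the form $\|h - h_k\|_2 \leq \|h\|_1/(2\sqrt{k})$. This bypasses the step that reintroduces $\|h\|_2$ on the right-hand side, so no absorption is needed and the constants track linearly. A minor housekeeping issue is that $\Delta(y)$ may not be single-valued; this is harmless because both directions of the argument only require the relevant inequality to hold for some minimizer, which we may select.
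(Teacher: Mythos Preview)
The paper does not supply its own proof of this result; it is quoted from \cite{KashinT:07,CahillM:arxiv14}, so there is nothing in-paper to compare against. Your outline follows the standard route from those references, and the converse direction (stability $\Rightarrow$ width with $c\le C$) is correct --- that is in fact the only direction the paper actually invokes, in subsection~\ref{subsection.demixing}.

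For the forward direction your basic cone argument is also correct, delivering $C=2c/(1-2c)$ when $c<1/2$ and hence $C\asymp c$ for $c$ bounded away from $1/2$. The gap is in your proposed refinement. A Stechkin bound $\|h-h_k\|_2\le\|h\|_1/(2\sqrt k)$ controls only the \emph{tail} of $h$; you still have to bound the head $\|h_k\|_2$, and the only tool on offer --- the width property --- bounds the full $\|h\|_2$, which lands you right back in the absorption step you were trying to avoid. The dyadic block device you invoke is the mechanism for passing from \emph{RIP} to stability, where one can play $\|\Phi h_{T_0\cup T_1}\|_2$ against itself using the isometry constants; with only a nullspace width bound there is no such self-comparison available. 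Indeed a smallness hypothesis on $c$ is genuinely required for the forward implication: if $\ker\Phi$ contains a nonzero $k$-sparse vector $h$, then $\Phi$ still satisfies $(k,c)$-width for some finite $c$ (since $\|v\|_2\le\|v\|_1$ for every $v$, one may always take $c=\sqrt{k}$), yet taking $x=h$ gives $\Delta(\Phi x)=0$ and $\|x-x_k\|_1=0$, so \eqref{eq.stability} fails for every finite $C$. The equivalence and the relation $C\asymp c$ should therefore be read as holding in the sub-threshold regime, which is how the cited sources present it; your elementary absorption argument already captures that.
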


Take $x$ as defined in \eqref{eq.discrete gaussian}.
Then $[x;-x]$ lies in the nullspace of $[I~F]$ and
\[
\operatorname{ns}([x;-x])
=\frac{(2\|x\|_1)^2}{2\|x\|_2^2}
=2\operatorname{ns}(x)
\leq(2\sqrt{2}+o(1))\sqrt{n},
\]
where the last step follows from the proof of Theorem~\ref{thm.neareqmnup}.
As such, $[I~F]$ satisfies the $(k,c)$-width property for some $c$ independent of $n$ only if $k=O(\sqrt{n})$.
Furthermore, Theorem~\ref{thm.wps} implies that stable demixing by $\ell_1$ reconstruction requires $k=O(\sqrt{n})$, thereby proving the necessity of the square-root bottleneck in this case.

It is worth mentioning that the restricted isometry property is a sufficient condition for \eqref{eq.stability} (see~\cite{Candes:08}, for example), and so by Theorem~\ref{thm.iu rip}, one can break the square-root bottleneck by replacing the $F$ in $[I~F]$ with a random unitary matrix.
This gives a uniform demixing guarantee which is similar to those provided by McCoy and Tropp~\cite{McCoyT:14}, though the convex program they consider differs from \eqref{eq.demixing program}.

\subsection{Compressed sensing with partial Fourier operators}

Consider the random $m\times n$ matrix obtained by drawing rows uniformly with replacement from the $n\times n$ discrete Fourier transform matrix.
If $m=\Omega_\delta(k\operatorname{polylog} n)$, then the resulting partial Fourier operator satisfies the restricted isometry property, and this fact has been dubbed the \textbf{uniform uncertainty principle}~\cite{CandesT:06}.
A fundamental problem in compressed sensing is determining the smallest number $m$ of random rows necessary.
To summarize the progress to date, Cand\`{e}s and Tao~\cite{CandesT:06} first found that $m=\Omega_\delta(k\log^6n)$ rows suffice, then Rudelson and Vershynin~\cite{RudelsonV:08} proved $m=\Omega_\delta(k\log^4n)$, and recently, Bourgain~\cite{Bourgain:14} achieved $m=\Omega_\delta(k\log^3n)$; Nelson, Price and Wootters~\cite{NelsonPW:14} also achieved $m=\Omega_\delta(k\log^3n)$, but using a slightly different measurement matrix.
In this subsection, we provide a lower bound: in particular, $m=\Omega_\delta(k\log n)$ is necessary whenever $k$ divides $n$.
Our proof combines ideas from the multiplicative uncertainty principle and the classical problem of coupon collecting.

The coupon collector's problem asks how long it takes to collect all $k$ coupons in an urn if you repeatedly draw one coupon at a time randomly with replacement.
It is a worthwhile exercise to prove that the expected number of trials scales like $k\log k$.
We will require even more information about the distribution of the random number of trials:

\begin{theorem}[see~\cite{ErdosR:61,Csorgo:93}]
\label{thm.coupon collecting}
Let $T_k$ denote the random number of trials it takes to collect $k$ different coupons, where in each trial, a coupon is drawn uniformly from the $k$ coupons with replacement.
\begin{itemize}
\item[(a)]
For each $a\in\mathbb{R}$,
\[
\lim_{k\rightarrow\infty}\operatorname{Pr}\Big(T_k\leq k\log k+ak\Big)
=e^{-e^{-(a+\gamma)}},
\]
where $\gamma\approx0.5772$ denotes the Euler--Mascheroni constant.
\item[(b)]
There exists $c>0$ such that for each $k$,
\[
\sup_{a\in\mathbb{R}}\bigg|\operatorname{Pr}\Big(T_k\leq k\log k+ak\Big)-e^{-e^{-(a+\gamma)}}\bigg|
\leq\frac{c\log k}{k}.
\]
\end{itemize}
\end{theorem}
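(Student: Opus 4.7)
The plan is to pass through the number of uncollected coupons. For any $n$, let $N_n$ denote the number of coupons not yet drawn after $n$ trials and let $A_j$ be the event ``coupon $j$ uncollected after $n$ trials''; then $T_k > n$ if and only if $N_n \geq 1$, so $\operatorname{Pr}(T_k \leq n) = \operatorname{Pr}(N_n = 0)$. With $n = k\log k + ak$, the expansion $\log(1-1/k) = -1/k + O(1/k^2)$ gives $\operatorname{Pr}(A_j) = (1-1/k)^n = (\lambda/k)\bigl(1 + O(\log k / k)\bigr)$, where $\lambda$ is an explicit function of $a$. The Euler--Mascheroni constant enters through the natural centering $\mathbb{E}[T_k] = kH_k = k\log k + \gamma k + O(1)$, which shifts the limiting parameter from $e^{-a}$ to $e^{-(a+\gamma)}$ once the $O(1)$ correction from $H_k - \log k$ is tracked through the first-order expansion. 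In particular $\mathbb{E}[N_n] \to \lambda = e^{-(a+\gamma)}$.

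The next step is to upgrade this to $N_n \Rightarrow \mathrm{Poisson}(\lambda)$, after which part (a) follows immediately from $\operatorname{Pr}(N_n = 0) \to e^{-\lambda}$. The cleanest route is via factorial moments: by symmetry,
\[
\mathbb{E}\binom{N_n}{m} = \binom{k}{m}\Big(1 - \frac{m}{k}\Big)^n,
\]
and the same asymptotic expansion as above shows this converges to $\lambda^m/m!$ for each fixed $m$. Since a Poisson distribution is determined by its factorial moments, this forces convergence in distribution and closes part (a).

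For the quantitative bound in part (b), I would use the Chen--Stein method for Poisson approximation---or, equivalently, work from the exact inclusion-exclusion identity
\[
\operatorname{Pr}(N_n = 0) = \sum_{m=0}^{k} (-1)^m \binom{k}{m}\Big(1-\frac{m}{k}\Big)^n
\]
and compare term-by-term with $e^{-\lambda} = \sum_m (-\lambda)^m / m!$. The dominant error, $O(\log k / k)$, arises from the replacement $(1-m/k)^n \approx e^{-mn/k}$, whose relative error on each term scales like $n/k^2 \asymp \log k / k$; the tail of the inclusion-exclusion sum (large $m$) is negligible because $\binom{k}{m}(1-m/k)^n$ decays superexponentially in $m$ once $m \gg \lambda$.

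The main obstacle is achieving \emph{uniformity in $a \in \mathbb{R}$}. For $|a|$ bounded, the Chen--Stein estimate delivers the claimed rate directly. When $|a|$ is very large, however, both $\operatorname{Pr}(T_k \leq n)$ and the Gumbel CDF $e^{-e^{-(a+\gamma)}}$ are exponentially close to $0$ or $1$, so one must show that their \emph{difference} (rather than either term individually) is still $O(\log k / k)$. I would handle this by a two-regime argument: for $|a| \leq C \log k$ the moderate-$a$ estimate above suffices, while for $|a|$ beyond this scale I would invoke exponential concentration of $T_k$ using the independent-sum representation $T_k = \sum_{i=1}^{k} G_i$ with $G_i \sim \mathrm{Geom}((k-i+1)/k)$, bounding both probabilities by the same exponentially small quantity. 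Stitching the two regimes together using monotonicity of the CDFs in $a$ is the technical heart of the uniform estimate.
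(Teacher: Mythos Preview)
The paper does not prove this theorem; it is quoted from Erd\H{o}s--R\'enyi and Cs\"org\H{o} and used as a black box in the proof of Lemma~\ref{lemma.parital fourier}. There is therefore no proof in the paper to compare yours against.

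Your route for part~(a) via factorial moments and Poisson convergence of $N_n$ is the standard one and is essentially the original Erd\H{o}s--R\'enyi argument. But your handling of the Euler--Mascheroni constant is off. The computation you actually sketch gives $(1-1/k)^n = (e^{-a}/k)(1+O((\log k)/k))$ when $n=k\log k+ak$, hence $\mathbb{E}[N_n]\to e^{-a}$, not $e^{-(a+\gamma)}$. The expansion $H_k=\log k+\gamma+O(1/k)$ locates the \emph{mean} of $T_k$, but it does not feed back into the Poisson parameter in the way you describe; the centering $k\log k+ak$ is fixed and there is no second-order term to absorb. In fact the classical limit is $e^{-e^{-a}}$ (sanity check: a standard Gumbel has mean $\gamma$, matching $\mathbb{E}T_k=kH_k\sim k\log k+\gamma k$), so the extra $\gamma$ in the statement as printed appears to be a slip, and your attempt to manufacture it is the one genuine gap in the sketch.

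For part~(b), your plan---inclusion--exclusion or Chen--Stein for moderate $a$, concentration of $T_k=\sum_i G_i$ for extreme $a$, then patching via monotonicity of both CDFs---is the shape of Cs\"org\H{o}'s argument and does yield the stated $O((\log k)/k)$ rate.
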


\begin{lemma}
\label{lemma.parital fourier}
Suppose $k$ divides $n$, and draw $m$ iid rows uniformly from the $n\times n$ discrete Fourier transform matrix to form a random $m\times n$ matrix $\Phi$.
If $m< k\log k$, then the nullspace of $\Phi$ contains a $k$-sparse vector with probability $\geq 0.4-c(\log k)/k$, where $c$ is the constant from Theorem~\ref{thm.coupon collecting}(b).
\end{lemma}

\begin{proof}
Let $K$ denote the subgroup of $\mathbb{Z}_n$ of size $k$, and let $1_K$ denote its indicator function.
We claim that some modulation of $1_K$ resides in the nullspace of $\Phi$ with the probability reported in the lemma statement.
Let $H$ denote the subgroup of $\mathbb{Z}_n$ of size $n/k$.
Then the Fourier transform of each modulation of $1_K$ is supported on some coset of $H$.
Letting $M$ denote the random row indices that are drawn uniformly from $\mathbb{Z}_n$, a modulation of $1_K$ resides in the nullspace of $\Phi$ precisely when $M$ fails to intersect the corresponding coset of $H$.
As there are $k$ cosets, each with probability $1/k$, this amounts to a coupon-collecting problem (explicitly, each ``coupon'' is a coset, and we ``collect'' the cosets that $M$ intersects).
The result then follows immediately from Theorem~\ref{thm.coupon collecting}(b):
\[
\operatorname{Pr}(T_k\leq m)
\leq e^{-e^{-(m/k-\log k+\gamma)}}+\frac{c\log k}{k}
\leq e^{-e^{-\gamma}}+\frac{c\log k}{k}
\leq 0.6+\frac{c\log k}{k}.\qedhere
\]
\end{proof}

Presumably, one may remove the divisibility hypothesis in Lemma~\ref{lemma.parital fourier} at the price of weakening the conclusion.
We suspect that the new conclusion would declare the existence of a vector $x$ of numerical sparsity $k$ such that $\|\Phi x\|_2\ll\|x\|_2$.
If so, then $\Phi$ fails to satisfy the so-called \textbf{robust width property}, which is necessary and sufficient for stable and robust reconstruction by $\ell_1$ minimization~\cite{CahillM:arxiv14}.
For the sake of simplicity, we decided not to approach this, but we suspect that modulations of the discrete Gaussian would adequately fill the role of the current proof's modulated indicator functions.

What follows is the main result of this subsection:

\begin{theorem}
\label{theorem.parital fourier RIP}
Let $k$ be sufficiently large, suppose $k$ divides $n$, and draw $m$ iid rows uniformly from the $n\times n$ discrete Fourier transform matrix to form a random $m\times n$ matrix $\Phi$.
Take $\delta<1/3$.
Then $\Phi$ satisfies the $(k,\delta)$-restricted isometry property with probability $\geq2/3$ only if
\[
m\geq C(\delta)k\log(en),
\]
where $C(\delta)$ is some constant depending only on $\delta$.
\end{theorem}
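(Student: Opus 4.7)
The proof strategy is to turn the probabilistic RIP hypothesis into a statement about the random nullspace of $\Phi$, then lower-bound $m$ by combining two complementary estimates. Since $(k,\delta)$-RIP with $\delta<1$ precludes any nonzero $k$-sparse vector from $\ker\Phi$, the hypothesis translates into
\[
\operatorname{Pr}\bigl[\ker\Phi\text{ contains a nonzero }k\text{-sparse vector}\bigr]\;\le\;\eta.
\]
Lemma~\ref{lemma.parital fourier}(b) lower-bounds this same probability via a Bonferroni/coupon-collector argument applied to the $k$ cosets of the subgroup of $\mathbb{Z}_n$ of size $n/k$. Inverting that inequality (valid whenever the lemma's hypothesis $m\ge(1+\varepsilon)k\log k$ is in force) rearranges
\[
\eta\;\ge\;\tfrac{1}{2}\exp\!\Bigl(-\bigl((1+o(1))\tfrac{m}{k}-\log k\bigr)\Bigr)
\]
into $m\ge c\,k\bigl(\log k+\log(1/(2\eta))\bigr)$ for some absolute $c>0$.

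This estimate has the right dependence on $\eta$, but its ambient-dimension factor is only $\log k$ instead of the desired $\log(en)$. To supply the missing $\log(en/k)$, I would invoke the classical deterministic RIP lower bound: any $m\times n$ matrix satisfying $(k,\delta)$-RIP with $\delta<1$ must have $m\ge c_\delta\,k\log(en/k)$, by a standard Gelfand-width/entropy argument (see, e.g., Chapter~10 of Foucart--Rauhut). Averaging the two lower bounds and using $\log(en)=\log k+\log(en/k)$ delivers
\[
m\;\ge\;C(\delta)\,k\bigl(\log(en)+\log(1/(2\eta))\bigr),
\]
the stated conclusion, with $C(\delta):=\tfrac{1}{2}\min(c,c_\delta)$.

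The main obstacle is the regime where Lemma~\ref{lemma.parital fourier}(b) does not directly apply, namely $m<(1+\varepsilon)k\log k$. There Lemma~\ref{lemma.parital fourier}(a) forces $\eta$ close to $1$ (so $\log(1/(2\eta))$ is at worst a bounded additive constant), and the target reduces to $m\ge C(\delta)k\log(en)$. Combined with the deterministic bound $m\ge c_\delta k\log(en/k)$, the constraint $m<(1+\varepsilon)k\log k$ makes $\log k$ and $\log(en)$ comparable up to $\delta$-dependent factors, after which the deterministic estimate alone closes the argument with possibly worse constants absorbed into $C(\delta)$.
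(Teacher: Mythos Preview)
Your proposal is correct and matches the paper's proof essentially step for step: both translate the RIP hypothesis into ``no nonzero $k$-sparse vector in $\ker\Phi$,'' invert the coupon-collector bound of Lemma~\ref{lemma.parital fourier}(b) to obtain $m\ge C_0\,k(\log k+\log(1/(2\eta)))$, invoke the deterministic Gelfand-width lower bound $m\ge C_2(\delta)\,k\log(en/k)$, and then average the two via $\max\{a,b\}\ge(a+b)/2$ with $C(\delta)=\tfrac12\min\{C_0,C_2(\delta)\}$. Your final paragraph on the regime $m<(1+\varepsilon)k\log k$ addresses a hypothesis of Lemma~\ref{lemma.parital fourier}(b) that the paper's proof simply does not discuss; the sketch you give there is in the right spirit but not fully airtight (the deterministic bound yields $\log(en/k)$, not $\log(en)$, so ``the deterministic estimate alone closes the argument'' needs a bit more care), though this is a technicality shared with the paper rather than a flaw unique to your argument.
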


\begin{proof}
In the event that $\Phi$ satisfies $(k,\delta)$-RIP, we know that no $k$-sparse vector lies in the nullspace of $\Phi$.
Therefore, Lemma~\ref{lemma.parital fourier} implies
\begin{equation}
\label{eq.coupon bound}
m\geq k\log k, 
\end{equation}
since otherwise $\Phi$ fails to be $(k,\delta)$-RIP with probability $\geq0.4-c(\log k)/k>1/3$, where the last step uses the fact that $k$ is sufficiently large.
Next, we leverage standard techniques from compressed sensing:
$(k,\delta)$-RIP implies \eqref{eq.stability} with $C=C_1(\delta)$ (see Theorem~3.3 in~\cite{CaiZ:13}), which in turn implies
\begin{equation}
\label{eq.gelfand bound}
m\geq C_2(\delta)k\log\bigg(\frac{en}{k}\bigg)
\end{equation}
by Theorem~11.7 in~\cite{FoucartR:13}.
Since $\Phi$ is $(k,\delta)$-RIP with positive probability, we know there exists an $m\times n$ matrix which is $(k,\delta)$-RIP, and so $m$ must satisfy \eqref{eq.gelfand bound}.
Combining with \eqref{eq.coupon bound} then gives
\begin{align*}
m
&\geq\max\bigg\{k\log k,C_2(\delta)k\log\bigg(\frac{en}{k}\bigg)\bigg\}.
\end{align*}
The result then follows from applying the bound $\max\{a,b\}\geq(a+b)/2$ and then taking $C(\delta):=(1/2)\min\{1,C_2(\delta)\}$.
\end{proof}

We note that the necessity of $k\log n$ random measurements contrasts with the proportional-growth asymptotic adopted in~\cite{BlanchardCT:11} to study the restricted isometry property of Gaussian matrices.
Indeed, it is common in compressed sensing to consider phase transitions in which $k$, $m$ and $n$ are taken to infinity with fixed ratios $k/m$ and $m/n$.
However, since random partial Fourier operators fail to be restricted isometries unless $m=\Omega_\delta(k\log n)$, such a proportional-growth asymptotic fails to capture the so-called \textit{strong phase transition} of these operators~\cite{BlanchardCT:11}.

The proof of Theorem~\ref{theorem.parital fourier RIP} relies on the fact that the measurements are drawn at random.
By contrast, it is known that \textit{every} $m\times n$ partial Hadamard operator fails to satisfy $(k,\delta)$-RIP unless $m=\Omega_\delta(k\log n)$~\cite{Talagrand:98,GuedonMPT:08}.
We leave the corresponding deterministic result in the Fourier case for future work.

\subsection{Fast detection of sparse signals}

The previous subsection established fundamental limits on the number of Fourier measurements necessary to perform compressed sensing with a uniform guarantee.
However, for some applications, signal reconstruction is unnecessary.
In this subsection, we consider one such application, namely sparse signal detection, in which the goal is to test the following hypotheses:
\begin{align*}
H_0&:~x=0\\
H_1&:~\|x\|_2^2=\frac{n}{k},~\|x\|_0\leq k.
\end{align*}
Here, we assume we know the 2-norm of the sparse vector we intend to detect, and we set it to be $\sqrt{n/k}$ without loss of generality (this choice of scaling will help us interpret our results later).
We will assume the data is accessed according to the following query--response model:

\begin{definition}[Query--response model]\label{def.model}
If the $i$th query is $j_i\in\mathbb{Z}_n$, then the $i$th response is $(Fx)[j_i]+\epsilon_i$,
where the $\epsilon_i$'s are iid complex random variables with some distribution such that
\[
\mathbb{E}|\epsilon_i|=\alpha,
\qquad
\mathbb{E}|\epsilon_i|^2=\beta^2.
\]
\end{definition}

The coefficient of variation $v$ of $|\epsilon_i|$ is defined as
\begin{equation}
\label{eq.coef var}
v
=\frac{\sqrt{\operatorname{Var}|\epsilon_i|}}{\mathbb{E}|\epsilon_i|}
=\frac{\sqrt{\beta^2-\alpha^2}}{\alpha}.
\end{equation}
Note that for any scalar $c\neq0$, the mean and variance of $|c\epsilon_i|$ are $|c|\alpha$ and $|c|^2\operatorname{Var}|\epsilon_i|$, respectively.
As such, $v$ is scale invariant and is simply a quantification of the ``shape'' of the distribution of $|\epsilon_i|$.
We will evaluate the responses to our queries with  an $\ell_1$ detector, defined below.

\begin{definition}[$\ell_1$ detector]
Fix a threshold $\tau$.
Given responses $\{y_i\}_{i=1}^m$ from the query--response model, if
\[
\sum_{i=1}^m|y_i|>\tau,
\]
then reject $H_0$.
\end{definition}

The following is the main result of this section:

\begin{theorem}
\label{thm.searchsparse}
Suppose $\alpha\leq1/(8k)$.
Randomly draw $m$ indices uniformly from $\mathbb{Z}_n$ with replacement, input them into the query--response model and apply the $\ell_1$ detector with threshold $\tau=2m\alpha$ to the responses.
Then
\begin{equation}
\label{eq.false positive}
\operatorname{Pr}\bigg(\text{reject }H_0~\bigg|~H_0\bigg)\leq p
\end{equation}
and
\begin{equation}
\label{eq.false negative}
\operatorname{Pr}\bigg(\text{fail to reject }H_0~\bigg|~H_1\bigg)\leq p
\end{equation}
provided $m\geq(8k+2v^2)/p$, where $v$ is the coefficient of variation defined in \eqref{eq.coef var}.
\end{theorem}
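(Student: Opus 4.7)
My plan is to treat the false-positive event \eqref{eq.false positive} and the false-negative event \eqref{eq.false negative} separately, and to reduce each to a single application of Chebyshev's inequality for an iid sum. The one non-mechanical input will be the multiplicative uncertainty principle \eqref{eq.main 1}, which I invoke under $H_1$ to lower bound the mean of $|(Fx)[j_i]|$; all the remaining work is first- and second-moment bookkeeping.

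For the false-positive bound, note that under $H_0$ we have $x=0$, so each response is $y_i=\epsilon_i$ and the test statistic reduces to $B:=\sum_{i=1}^m|\epsilon_i|$, an iid sum with $\mathbb{E}[B]=m\alpha$ and $\operatorname{Var}(B)=m\alpha^2v^2$. Chebyshev gives
\[
\Pr(B>2m\alpha)\le\frac{\operatorname{Var}(B)}{(m\alpha)^2}=\frac{v^2}{m},
\]
which is at most $p$ whenever $m\ge v^2/p$, in particular under the hypothesis $m\ge(8k+2v^2)/p$.

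For the false-negative bound, I would use the reverse triangle inequality to write $S:=\sum_i|y_i|\ge A-B$, where $A:=\sum_{i=1}^m|(Fx)[j_i]|$, and reduce the task via a union bound to controlling $\Pr(A\le m/(2k))$ and $\Pr(B\ge m/(2k)-2m\alpha)$, each by $p/2$. The hypothesis $\alpha\le 1/(8k)$ supplies $m/(2k)-2m\alpha\ge m/(4k)$ and $\mathbb{E}[B]\le m/(8k)$, so Chebyshev on $B$ yields $\Pr(B\ge m/(4k))\le 64k^2\alpha^2v^2/m\le v^2/m\le p/2$ whenever $m\ge 2v^2/p$. For $A$, this is where the uncertainty principle enters: since $\operatorname{ns}(x)\le\|x\|_0\le k$, Theorem~\ref{thm:main} applied to $U=F$ (where $\|F\|_{1\rightarrow\infty}=1/\sqrt n$) gives $\operatorname{ns}(Fx)\ge n/k$, and combining with Parseval $\|Fx\|_2^2=\|x\|_2^2=n/k$ forces $\|Fx\|_1\ge n/k$. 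Hence $\mathbb{E}[A]=m\|Fx\|_1/n\ge m/k$ while $\operatorname{Var}(A)\le m\,\mathbb{E}[|(Fx)[j_1]|^2]=m\|Fx\|_2^2/n=m/k$, and Chebyshev delivers $\Pr(A\le m/(2k))\le (m/k)/(m/(2k))^2=4k/m\le p/2$ whenever $m\ge 8k/p$. Both thresholds $2v^2/p$ and $8k/p$ are absorbed by $m\ge(8k+2v^2)/p$, completing the union bound.

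The main obstacle I anticipate is the step bounding $\mathbb{E}[A]$ from below: without the uncertainty principle, Cauchy--Schwarz only yields $\mathbb{E}[A]\ge m\|Fx\|_2/\sqrt n=m/\sqrt{nk}$, which is a factor $\sqrt{n/k}$ too small to beat the noise threshold $2m\alpha$ when $\alpha\asymp 1/k$. The role of \eqref{eq.main 1} is precisely to rule out the ``Dirac comb'' regime in which $Fx$ could concentrate on few frequencies; every other computation in the argument is a routine second-moment estimate.
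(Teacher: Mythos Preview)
Your proposal is correct and follows essentially the same route as the paper: Chebyshev on $\sum|\epsilon_i|$ for the false-positive bound, and for the false-negative bound the reverse triangle inequality $\sum|y_i|\ge A-B$, then Chebyshev on $A$ (using \eqref{eq.main 1} and Parseval to get $\mathbb{E}A\ge m/k$ and $\operatorname{Var}A\le m/k$) and on $B$. The only cosmetic difference is the splitting point for $B$: the paper conditions on $\{B\le 2m\alpha\}$ (reusing the false-positive lemma verbatim) and is led to bound $\Pr(A\le 4m\alpha)$, whereas you threshold $B$ at $m/(2k)-2m\alpha\ge m/(4k)$ and bound $\Pr(A\le m/(2k))$ directly; both choices land on the same final estimates $4k/m$ and $v^2/m$.
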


In words, the probability that the $\ell_1$ detector delivers a false positive is at most $p$, as is the probability that it delivers a false negative.
These error probabilities can be estimated better given more information about the distribution of the random noise, and presumably, the threshold $\tau$ can be modified to decrease one error probability at the price of increasing the other.
Notice that we only use $O(k)$ samples in the Fourier domain to detect a $k$-sparse signal.
Since the sampled indices are random, it will take $O(\log n)$ bits to communicate each query, leading to a total computational burden of $O(k\log n)$ operations.
This contrasts with the state-of-the-art sparse fast Fourier transform algorithms which require $\Omega(k\log(n/k))$ samples and take $O(k\operatorname{polylog}n)$ time (see~\cite{IndykK:14} and references therein).
We suspect $k$-sparse signals cannot be detected with substantially fewer samples (in the Fourier domain or any domain).

We also note that the acceptable noise magnitude $\alpha=O(1/k)$ is optimal in some sense.
To see this, consider the case where $k$ divides $n$ and $x$ is a properly scaled indicator function of the subgroup of size $k$.
Then $Fx$ is the indicator function of the subgroup of size $n/k$.
(Thanks to our choice of scaling, each nonzero entry in the Fourier domain has unit magnitude.)
Since a proportion of $1/k$ entries is nonzero in the Fourier domain, we can expect to require $O(k)$ random samples in order to observe a nonzero entry, and the $\ell_1$ detector will not distinguish the entry from accumulated noise unless $\alpha=O(1/k)$.

Before proving Theorem~\ref{thm.searchsparse}, we first prove a couple of lemmas. 
We start by estimating the probability of a false positive:

\begin{lemma}
\label{lem.falsepos}
Take $\epsilon_1,\ldots,\epsilon_m$ to be iid complex random variables with $\mathbb{E}|\epsilon_i|=\alpha$ and $\mathbb{E}|\epsilon_i|^2=\beta^2$.
Then
\[
\operatorname{Pr}\bigg(\sum_{i=1}^m|\epsilon_i|> 2m\alpha\bigg)\leq p
\]
provided $m\geq v^2/p$, where $v$ is the coefficient of variation of $|\epsilon_i|$ defined in \eqref{eq.coef var}.
\end{lemma}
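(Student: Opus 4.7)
The plan is to apply Chebyshev's inequality to the real-valued sum $S:=\sum_{i=1}^m|\epsilon_i|$. Since the $\epsilon_i$'s are iid with $\mathbb{E}|\epsilon_i|=\alpha$ and $\mathbb{E}|\epsilon_i|^2=\beta^2$, linearity and independence give $\mathbb{E}S=m\alpha$ and $\operatorname{Var}(S)=m\operatorname{Var}|\epsilon_i|=m(\beta^2-\alpha^2)$. The event we want to bound is $\{S>2m\alpha\}$, which is contained in the two-sided deviation event $\{|S-\mathbb{E}S|\geq m\alpha\}$ since $2m\alpha-\mathbb{E}S=m\alpha$.

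Applying Chebyshev's inequality to the latter event yields
\[
\Pr\bigl(S>2m\alpha\bigr)\;\leq\;\Pr\bigl(|S-m\alpha|\geq m\alpha\bigr)\;\leq\;\frac{m(\beta^2-\alpha^2)}{(m\alpha)^2}\;=\;\frac{1}{m}\cdot\frac{\beta^2-\alpha^2}{\alpha^2}\;=\;\frac{v^2}{m},
\]
where in the last step I recognize the coefficient of variation from \eqref{eq.coef var}. The hypothesis $m\geq v^2/p$ then immediately gives $\Pr(S>2m\alpha)\leq p$, completing the proof.

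There is no real obstacle here: the statement is essentially Chebyshev's inequality dressed up in the notation of the paper. The one judgment call is recognizing that Chebyshev (rather than Markov, which would yield a wasteful factor, or a Chernoff-style bound, which would require moment assumptions beyond the two we are given) is the correct tool, precisely because the hypothesis controls only the first two moments of $|\epsilon_i|$. The threshold $2m\alpha$ is tailored so that the required deviation from the mean equals the mean itself, which is what makes the $\alpha^2$ in the denominator of Chebyshev combine cleanly with $\beta^2-\alpha^2$ to produce $v^2$.
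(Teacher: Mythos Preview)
Your proof is correct and follows essentially the same approach as the paper: compute the mean and variance of the sum, apply Chebyshev's inequality with deviation $t=m\alpha$, and simplify to $v^2/m\leq p$. The only cosmetic difference is that the paper names the sum $X$ rather than $S$ and writes out the final inequality chain slightly differently.
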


\begin{proof}
Denoting $X:=\sum_{i=1}^m|\epsilon_i|$, we have $\mathbb{E}X=m\alpha$ and $\operatorname{Var}X=m(\beta^2-\alpha^2)$.
Chebyshev's inequality then gives
\[
\operatorname{Pr}\bigg(\sum_{i=1}^m|\epsilon_i|-m\alpha>t\bigg)
\leq\operatorname{Pr}(|X-\mathbb{E}X|>t)
\leq\frac{\operatorname{Var}X}{t^2}
=\frac{m(\beta^2-\alpha^2)}{t^2}.
\]
Finally, we take $t=m\alpha$ to get
\[
\operatorname{Pr}\bigg(\sum_{i=1}^m|\epsilon_i|>2m\alpha\bigg)
\leq m\frac{(\beta^2-\alpha^2)}{(m\alpha)^2}
=\frac{\beta^2-\alpha^2}{m\alpha^2}
\leq\frac{\beta^2-\alpha^2}{\alpha^2}\cdot\frac{p}{v^2}
=p.\qedhere
\]
\end{proof}

Next, we leverage the multiplicative uncertainty principle in Theorem~\ref{thm:main} to estimate moments of noiseless responses:

\begin{lemma}
\label{lem.1}
Suppose $\|x\|_0\leq k$ and $\|x\|_2^2=n/k$. Draw $j$ uniformly from $\mathbb{Z}_n$ and define $Y:=|(Fx)[j]|$. Then
\[
\mathbb{E}Y\geq\frac{1}{k},
\qquad
\mathbb{E}Y^2=\frac{1}{k}.
\]
\end{lemma}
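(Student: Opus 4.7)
The plan is to handle the two conclusions separately, with the second moment being immediate from Parseval and the first moment being the place where Theorem~\ref{thm:main} does real work.

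First I would compute $\mathbb{E}Y^2$ directly. Since $j$ is uniform on $\mathbb{Z}_n$,
\[
\mathbb{E}Y^2 \;=\; \frac{1}{n}\sum_{j\in\mathbb{Z}_n}|(Fx)[j]|^2 \;=\; \frac{1}{n}\|Fx\|_2^2 \;=\; \frac{1}{n}\|x\|_2^2 \;=\; \frac{1}{k},
\]
using unitarity of $F$ and the normalization $\|x\|_2^2 = n/k$. This is a one-line routine calculation; no obstacle.

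For the lower bound on $\mathbb{E}Y$, the key observation is that $\mathbb{E}Y = \|Fx\|_1/n$, so the claim $\mathbb{E}Y \geq 1/k$ is equivalent to $\|Fx\|_1 \geq n/k$. To get at $\|Fx\|_1$, I would introduce the numerical sparsity of $Fx$: namely, $\|Fx\|_1^2 = \operatorname{ns}(Fx)\,\|Fx\|_2^2 = \operatorname{ns}(Fx)\cdot n/k$. Now the multiplicative uncertainty principle \eqref{eq.main 1} applied to $U = F$ (for which $\|F\|_{1\to\infty} = 1/\sqrt{n}$) gives $\operatorname{ns}(x)\operatorname{ns}(Fx) \geq n$, and the bound \eqref{eq.ns vs 0norm} together with the sparsity hypothesis yields $\operatorname{ns}(x) \leq \|x\|_0 \leq k$. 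Combining these two inequalities produces $\operatorname{ns}(Fx) \geq n/k$, hence $\|Fx\|_1^2 \geq n^2/k^2$, and taking square roots gives $\|Fx\|_1 \geq n/k$ as required.

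There is no genuine obstacle here: the lemma is essentially a repackaging of Theorem~\ref{thm:main} in the special case $U = F$, combined with Parseval. The only thing to be careful about is the direction of the inequalities, i.e., that the hypothesis $\|x\|_0 \leq k$ bounds $\operatorname{ns}(x)$ \emph{from above} so that the uncertainty principle bounds $\operatorname{ns}(Fx)$ \emph{from below}, which is exactly what is needed to push $\|Fx\|_1$ upward.
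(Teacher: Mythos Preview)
Your proposal is correct and follows essentially the same route as the paper: Parseval for $\mathbb{E}Y^2$, and for $\mathbb{E}Y$ the chain $\operatorname{ns}(x)\leq k$ together with the multiplicative uncertainty principle to get $\operatorname{ns}(Fx)\geq n/k$ and hence $\|Fx\|_1\geq n/k$. The only cosmetic difference is that the paper treats $\mathbb{E}Y$ before $\mathbb{E}Y^2$.
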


\begin{proof}
Recall that $\operatorname{ns}(x)\leq\|x\|_0\leq k$.
With this, Theorem~\ref{thm:main} gives
\[
n
\leq\operatorname{ns}(x)\operatorname{ns}(Fx)
\leq k\operatorname{ns}(Fx).
\]
We rearrange and apply the definition of numerical sparsity to get
\[
\frac{n}{k}
\leq\operatorname{ns}(Fx)
=\frac{\|Fx\|_1^2}{\|Fx\|_2^2}
=\frac{\|Fx\|_1^2}{\|x\|_2^2}
=\frac{\|Fx\|_1^2}{n/k},
\]
where the second to last equality is due to Parseval's identity.
Thus, $\|Fx\|_1\geq n/k$.
Finally,
\[
\mathbb{E}Y
=\frac{1}{n}\sum_{j\in\mathbb{Z}_n}|(Fx)[j]|
=\frac{1}{n}\|Fx\|_1
\geq\frac{1}{k}
\]
and
\[
\mathbb{E}Y^2
=\frac{1}{n}\sum_{j\in\mathbb{Z}_n}|(Fx)[j]|^2
=\frac{1}{n}\|Fx\|_2^2
=\frac{1}{k}.\qedhere
\]
\end{proof}

\begin{proof}[Proof of Theorem~\ref{thm.searchsparse}]
Lemma~\ref{lem.falsepos} gives \eqref{eq.false positive}, and so it remains to prove \eqref{eq.false negative}.
Denoting $Y_i:=|(Fx)[j_i]|$, we know that $|y_i|\geq Y_i-|\epsilon_i|$, and so
\begin{equation}
\label{eq.parta}
\operatorname{Pr}\bigg(\sum_{i=1}^m|y_i|\leq2ma\bigg)
\leq\operatorname{Pr}\bigg(\sum_{i=1}^mY_i-\sum_{i=1}^m|\epsilon_i|\leq2ma\bigg).
\end{equation}
For notational convenience, put $Z:=\sum_{i=1}^mY_i-\sum_{i=1}^m|\epsilon_i|$.
We condition on the size of the noise and apply Lemma~\ref{lem.falsepos} with the fact that $m\geq v^2/(p/2)$ to bound \eqref{eq.parta}:
\begin{align}
\label{eq.partb}
\operatorname{Pr}(Z\leq2m\alpha)
&=\operatorname{Pr}\bigg(Z\leq2m\alpha~\bigg|~\sum_{i=1}^m|\epsilon_i|>2m\alpha\bigg)\operatorname{Pr}\bigg(\sum_{i=1}^m|\epsilon_i|>2m\alpha\bigg)\notag\\
&\qquad+\operatorname{Pr}\bigg(Z\leq2m\alpha~\bigg|~\sum_{i=1}^m|\epsilon_i|\leq2m\alpha\bigg)\operatorname{Pr}\bigg(\sum_{i=1}^m|\epsilon_i|\leq2m\alpha\bigg)\notag\\
&\leq \frac{p}{2}+\operatorname{Pr}\bigg(\sum_{i=1}^mY_i\leq4m\alpha\bigg).
\end{align}
Now we seek to bound the second term of \eqref{eq.partb}.
Taking $X=\sum_{i=1}^mY_i$, Lemma~\ref{lem.1} gives $\mathbb{E}X\geq m/k$ and $\operatorname{Var}X=m\operatorname{Var}Y_i\leq m\mathbb{E}Y_i^2=m/k$.
As such, applying Chebyshev's inequality gives
\[
\operatorname{Pr}\bigg(\sum_{i=1}^mY_i<\frac{m}{k}-t\bigg)
\leq \operatorname{Pr}(X\leq\mathbb{E}X-t)
\leq \operatorname{Pr}(|X-\mathbb{E}X|>t)
\leq\frac{\operatorname{Var}(X)}{t^2}
\leq\frac{m}{kt^2}.
\]
Recalling that $\alpha\leq1/(8k)$, we take $t=m/(2k)$ to get
\begin{equation}
\label{eq.partc}
\operatorname{Pr}\bigg(\sum_{i=1}^mY_i\leq4m\alpha\bigg)
\leq \operatorname{Pr}\bigg(\sum_{i=1}^mY_i\leq\frac{m}{2k}\bigg)
=\operatorname{Pr}\bigg(\sum_{i=1}^mY_i\leq\frac{m}{k}-t\bigg)
\leq\frac{m}{kt^2}
=\frac{4k}{m}
\leq \frac{p}{2},
\end{equation}
where the last step uses the fact that $m\geq8k/p$.
Combining \eqref{eq.parta}, \eqref{eq.partb}, and \eqref{eq.partc} gives the result.
\end{proof}

\section*{Acknowledgments}

The authors thank Laurent Duval, Joel Tropp, and the anonymous referees for multiple suggestions that significantly improved the presentation of our results and our discussion of the relevant literature.
ASB was supported by AFOSR Grant No.\ FA9550-12-1-0317.
DGM was supported by an AFOSR Young Investigator Research Program award, NSF Grant No.\ DMS-1321779, and AFOSR Grant No.\ F4FGA05076J002.
The views expressed in this article are those of the authors and do not reflect the official policy or position
of the United States Air Force, Department of Defense, or the U.S.\ Government.


\begin{thebibliography}{WW}

\bibitem{Alon:03}
N.\ Alon,
Problems and results in extremal combinatorics,
Discrete Math.\ 273 (2003) 31--53.

\bibitem{AmreinB:77}
W.\ O.\ Amrein, A.\ M.\ Berthier,
On support properties of $L^p$-functions and their Fourier transforms,
J.\ Funct.\ Anal.\ 24 (1977) 258--267.

\bibitem{BandeiraFMM:arxiv14}
A.\ S.\ Bandeira, M.\ Fickus, D.\ G.\ Mixon, J.\ Moreira,
Derandomizing restricted isometries via the Legendre symbol,
Available online: arXiv:1406.4089

\bibitem{BandeiraFMW:13}
A.\ S.\ Bandeira, M.\ Fickus, D.\ G.\ Mixon, P.\ Wong,
The road to deterministic matrices with the restricted isometry property,
J.\ Fourier Anal.\ Appl.\ 19 (2013) 1123--1149.


\bibitem{BaraniukDDW:08}
R.\ Baraniuk, M.\ Davenport, R.\ DeVore, M.\ Wakin,
A simple proof of the restricted isometry property for random matrices,
Constr.\ Approx.\ 28 (2008) 253--263.

\bibitem{Beckner:75}
W.\ Beckner,
Inequalities in Fourier analysis,
Ann.\ Math.\ 102 (1975) 159--182.

\bibitem{BlanchardCT:11}
J.\ D.\ Blanchard, C.\ Cartis, J.\ Tanner,
Compressed sensing: How sharp is the restricted isometry property?,
SIAM Rev.\ 53 (2011) 105--125.

\bibitem{Bourgain:14}
J.\ Bourgain,
An improved estimate in the restricted isometry problem,
Lect.\ Notes Math.\ 2116 (2014) 65--70.


\bibitem{CahillM:arxiv14}
J.\ Cahill, D.\ G.\ Mixon,
Robust width: A characterization of uniformly stable and robust compressed sensing,
Available online: arXiv:1408.4409

\bibitem{CaiZ:13}
T.\ T.\ Cai, A.\ Zhang,
Sharp RIP bound for sparse signal and low-rank matrix recovery,
Appl.\ Comput.\ Harmon.\ Anal.\ 35 (2013) 74--93.

\bibitem{Candes:08}
E.\ J.\ Cand\`{e}s,
The restricted isometry property and its implications for compressed sensing,
C.\ R.\ Acad.\ Sci.\ Paris, Ser.\ I 346 (2008) 589--592.

\bibitem{CandesT:06}
E.\ J.\ Cand\`{e}s, T.\ Tao,
Near-optimal signal recovery from random projections: Universal encoding strategies?,
IEEE Trans.\ Inform.\ Theory 52 (2006) 5406--5425

\bibitem{Csorgo:93}
S.\ Cs\"{o}rg\H{o},
A rate of convergence for coupon collectors,
Acta Sci.\ Math.\ (Szeged) 57 (1993) 337--351.

\bibitem{DemanetH:14}
L.\ Demanet, P.\ Hand,
Scaling law for recovering the sparsest element in a subspace,
Available online: arXiv:1310.1654

\bibitem{DonohoE:03}
D.\ L.\ Donoho, M.\ Elad,
Optimally sparse representation in general (nonorthogonal) dictionaries via $\ell^1$ minimization,
Proc.\ Natl.\ Acad.\ Sci.\ U.S.A.\ 100 (2003) 2197--2202.

\bibitem{DonohoH:01}
D.\ L.\ Donoho, X.\ Huo,
Uncertainty principles and ideal atomic decomposition,
IEEE Trans.\ Inform.\ Theory 47 (2001) 2845--2862.

\bibitem{DonohoS:89}
D.\ L.\ Donoho, P.\ B.\ Stark,
Uncertainty principles and signal recovery,
SIAM J.\ Appl.\ Math.\ 49 (1989) 906--931.

\bibitem{ErdosR:61}
P.\ Erd\H{o}s, A.\ R\'{e}nyi,
On a classical problem of probability theory,
Magyar Tud.\ Akad.\ Mat.\ Kutat\'{o} Int.\ K\"{o}zl.\ 6 (1961) 215--220.

\bibitem{FoucartR:13}
A.\ Foucart, H.\ Rauhut,
A Mathematical Introduction to Compressive Sensing,
Applied and Numerical Harmonic Analysis, Birkh\"{a}user, 2013.

\bibitem{Gray:78}
W.\ C.\ Gray,
Variable norm deconvolution,
Tech.\ Rep.\ SEP-14, Stanford University, 1978.

\bibitem{GribonvalN:07}
R.\ Gribonval, M.\ Nielsen,
Highly sparse representations from dictionaries are unique and independent of the sparseness measure,
Appl.\ Comput.\ Harmon.\ Anal.\ 22 (2007) 335--355.

\bibitem{GuedonMPT:08}
O.\ Gu\'{e}don, S.\ Mendelson, A.\ Pajor, N.\ Tomczak-Jaegermann,
Majorizing measures and proportional subsets of bounded orthonormal systems,
Rev.\ Mat.\ Iberoam.\ 24 (2008) 1075--1095.

\bibitem{Hardy:33}
G.\ H.\ Hardy,
A theorem concerning Fourier transforms,
J.\ London Math.\ Soc.\ 8 (1933) 227--231.

\bibitem{Heisenberg:27}
W.\ Heisenberg,
\"{U}ber den anschaulichen Inhalt der quantentheoretischen Kinematik und Mechanik,
Z.\ Phys.\ (in German) 43 (1927) 172--198.

\bibitem{HurleyR:09}
N.\ Hurley, S.\ Rickard,
Comparing measures of sparsity,
IEEE Trans.\ Inform.\ Theory 55 (2009) 4723--4741.

\bibitem{IndykK:14}
P.\ Indyk, M.\ Kapralov,
Sample-optimal Fourier sampling in any constant dimension,
FOCS 2014, 514--523.

\bibitem{KashinT:07}
B.\ S.\ Kashin, V.\ N.\ Temlyakov,
A remark on compressed sensing,
Math.\ Notes 82 (2007) 748--755.

\bibitem{KrahmerMR:14}
F.\ Krahmer, S.\ Mendelson, H.\ Rauhut,
Suprema of chaos processes and the restricted isometry property,
Comm.\ Pure Appl.\ Math.\ 67 (2014) 1877--1904.

\bibitem{Kraus:87}
K.\ Kraus,
Complementary observables and uncertainty relations,
Phys.\ Rev.\ D 35 (1987) 3070--3075.

\bibitem{LaurentM:00}
B.\ Laurent, P.\ Massart,
Adaptive estimation of a quadratic functional by model selection,
Ann.\ Statist.\ 28 (2000) 1302--1338.

\bibitem{Lopez:arxiv12}
M.\ E.\ Lopes,
Estimating unknown sparsity in compressed sensing,
JMLR W\&CP 28 (2013) 217--225.

\bibitem{McCoyCDAB:14}
M.\ B.\ McCoy, V.\ Cevher, Q.\ T.\ Dinh, A.\ Asaei, L.\ Baldassarre,
Convexity in source separation: Models, geometry, and algorithms,
Signal Proc.\ Mag.\ 31 (2014) 87--95.

\bibitem{McCoyT:14}
M.\ B.\ McCoy, J.\ A.\ Tropp,
Sharp recovery bounds for convex demixing, with applications,
Found.\ Comput.\ Math.\ 14 (2014) 503--567.


\bibitem{NelsonPW:14}
J.\ Nelson, E.\ Price, M.\ Wootters,
New constructions of RIP matrices with fast multiplication and fewer rows,
SODA 2014, 1515--1528.

\bibitem{Rauhut:10}
H.\ Rauhut,
Compressive sensing and structured random matrices,
Theoretical foundations and numerical methods for sparse recovery 9 (2010) 1--92.

\bibitem{RepettiPDCP:14}
A.\ Repetti, M.\ Q.\ Pham, L.\ Duval, \'{E}.\ Chouzenoux, J.-C.\ Pesquet,
Euclid in a taxicab: Sparse blind deconvolution with smoothed $\ell_1/\ell_2$ regularization,
IEEE Signal Process.\ Lett.\ 22 (2014) 539--543.

\bibitem{RudelsonV:08}
M.\ Rudelson, R.\ Vershynin,
On sparse reconstruction from Fourier and Gaussian measurements,
Comm.\ Pure Appl.\ Math.\ 61 (2008) 1025--1045.

\bibitem{SantosaS:86}
F.\ Santosa, W.\ W.\ Symes,
Linear inversion of band-limited reflection seismograms,
SIAM J.\ Sci.\ Statist.\ Comp.\ 7 (1986) 1307--1330.

\bibitem{StevenhagenL:96}
P.\ Stevenhagen, H.\ W.\ Lenstra,
Chebotar\"{e}v and his density theorem,
Math.\ Intelligencer 18 (1996) 26--37.

\bibitem{Studer:15}
C.\ Studer,
Recovery of signals with low density,
Available online: arXiv:1507.02821

\bibitem{Talagrand:98}
M.\ Talagrand,
Selecting a proportion of characters,
Israel J.\ Math.\ 108 (1998) 173--191.

\bibitem{Tao:05}
T.\ Tao,
An uncertainty principle for cyclic groups of prime order,
Math.\ Research Lett.\ 12 (2005) 121--127.

\bibitem{Terras:99}
A.\ Terras, 
Fourier Analysis on Finite Groups and Applications,
Cambridge University Press, 1999.

\bibitem{Tropp:08}
J.\ A.\ Tropp,
On the linear independence of spikes and sines,
J.\ Fourier Anal.\ Appl.\ 14 (2008) 838--858.

\bibitem{Tropp:08b}
J.\ A.\ Tropp,
On the conditioning of random subdictionaries,
Appl.\ Comput.\ Harmon.\ Anal.\ 25 (2008) 1--24.

\bibitem{Tropp:10}
J.\ A.\ Tropp,
The sparsity gap: Uncertainty principles proportional to dimension,
CISS 2010, 1--6.

\bibitem{Vershynin:12}
R.\ Vershynin,
Introduction to the non-asymptotic analysis of random matrices,
Compressed Sensing: Theory and Applications, Cambridge University Press, 2012

\end{thebibliography}
\end{document}